\newcommand{\blind}{0}
\newtheorem{lemma}{Lemma}
\newtheorem{corollary}{Corollary}
\protected\def\ccell#1#{%
  \kern-\fboxsep
  \@ccell{#1}%
}
\def\@ccell#1#2#3{%
  \colorbox#1{#2}{#3}%
  \kern-\fboxsep
}
\newcommand{\gudr}{\ccell{blue!25}}
\begin{document}
\def\spacingset#1{\renewcommand{\baselinestretch}%
{#1}\small\normalsize} \spacingset{1}

\if0\blind
{
  \title{\bf Robust Detection of Covariate-Treatment Interactions in Clinical Trials}
  \author{        
    Baptiste Goujaud,
    Eric W. Tramel,
    Pierre Courtiol,\\
    Mikhail Zaslavskiy\thanks{Corresponding Author (\texttt{mikhail.zaslavskiy@owkin.com})},    
    and Gilles Wainrib \\
    Owkin, Inc.\\
    New York City, NY}
  \maketitle
} \fi

\if1\blind
{
  \bigskip
  \bigskip
  \bigskip
  \begin{center}
    {\LARGE\bf Robust Detection of Covariate-Treatment Interactions in Clinical Trials}
\end{center}
  \medskip
} \fi

\begin{abstract}
Detection of interactions between treatment effects and patient descriptors in clinical trials 
is critical for optimizing the drug development process. The increasing volume of data accumulated 
in clinical trials provides a unique opportunity to discover new biomarkers and further
the goal of personalized medicine, but it also requires innovative
robust biomarker detection methods capable of detecting non-linear, and sometimes
weak, signals.
We propose a set of novel univariate statistical tests, based on the theory of random
walks, which are able to capture non-linear and non-monotonic covariate-treatment interactions. 
We also propose a novel combined test, which leverages the power of all of our
proposed univariate tests into a single general-case tool. We present results for both
synthetic trials as well as real-world clinical trials, where we compare our method with 
state-of-the-art techniques and demonstrate the utility and robustness of our approach.
\end{abstract}

\noindent%
{\it Keywords:}  Random Walks, Statistical Test, Personalized Medicine, Biomarker Selection, Drug Development
\vfill

\newpage
\spacingset{1.45} 

\section{Introduction}
\label{sec:intro}

Designing new and efficient therapies is a long and ever more costly process, with 
less than ten percent
of new treatments entering Phase I finally being approved by the FDA and commercialized 
\cite{TBA2016,Har2016}. One of the major challenges for the improvement of drug development 
is to better understand how drugs interact with patients, particularly for treatments displaying 
heterogeneous responses. Therefore, conducting a detailed analysis of clinical trial data is 
critical to find subgroups of patients with higher benefit-risk ratio or to understand why a 
drug does not work on some sub-population to improve existing therapeutic strategies.

Moreover, understanding the relationships of patient descriptors which compose the most responsive 
cross-section of the population is of great importance when planning a Phase III trial, for 
salvaging failed trials, or accelerating advances in personalized medicine. This process of 
biomarker identification is critical to detect sub-groups within a given indication, but, 
as shown recently for immunotherapies, can also provide the basis for pan-indication drug 
approval \cite{fda2017}.

Identifying these patient subgroups, and understanding the descriptors, or \emph{covariates}, 
which distinguish them, is the domain of \emph{subgroup analysis}. This field of research has a 
long history in clinical biostatistics \cite{PAE2002, HWB1977, Sim1982}, and requires a very 
careful and rigorous methodological approach \cite{Wang:2007aa}, both for confirmatory or 
exploratory analysis, due to multiplicity, reproducibility, and false detection issues. 
Furthermore, subgroup analysis has garnered even more attention recently from the pharmaceutical 
community with both the advent of cheap and extensive genomic measurements as well as the dawning 
of the era of so-called \emph{big-data}. Now that incredibly detailed patient characterizations 
are available, the problem of subgroup identification has shifted from careful statistical analysis 
of select covariates, to data-mining and machine learning approaches. While the number of features 
is increasing by several order of magnitudes, the number of patients remains typically the same, 
which makes the statistical challenge even more difficult. Therefore, effective clinical trial data
analysis requires highly sensitive tools capable of detecting covariate-response associations from 
noisy and weak treatment response measurements, including discontinuous and non-monotonic 
interactions, without inflating the Type-I error rate. Indeed, false positive detection is a 
major caveat in subgroup analysis \cite{Wang:2007aa}, and controlling Type-I error is a crucial 
requirement for any reliable methodology. These tools can be used as a pre-processing step, 
pruning out insignificant covariates which might obfuscate accurate subgroup identification. 
They can also enable investigators to rank covariates by significance so as to focus laboratory 
studies to a few of the most promising biological pathways.

Many methods have been proposed for the detection of covariate-treatment interaction. 
In particular, we note  modified outcome regression \cite{tian2014simple}, outcome weighted 
learning \cite{ZZR2012}, and change-point detection \cite{koziol1996changepoint} methods as 
state-of-the art techniques which are useful in constructing baseline comparisons, as we do in 
this work. We refer the reader to the recent review of Lipkovich \emph{et al.} 
\cite{Lipkovich:2017aa}, where the authors review various methodologies ranging from classic 
statistical approaches to more sophisticated machine learning methods. On the one hand, 
well-founded and analytically rich statistical techniques provide rigorous estimates of Type-I 
error, but often miss complex, non-linear interactions. On the other hand, machine learning 
approaches allow for the detection of these complex, and sometimes non-monotonic, interactions. 
However, in general they fail to provide a proper estimate of Type-I error for the impact of 
individual covariates; they do not possess the characteristics common to proper statistical tests. 
Thus, we note a lack of methods which take the best of both worlds, offering high sensitivity for 
interactions with complex dependencies, while also providing rich statistical analysis and a 
controlled Type-I error rate.

To address this shortfall, we propose a new series of statistical tests, each of which is designed 
to detect particular structures in the treatment response signal which are often observed in actual 
clinical trial data. The proposed tests are constructed from a cumulative process on the effect 
size obtained by ranking patients according to a given covariate. These processes  characterize the 
complex dependency  structure of the covariate-treatment effect. We introduce several observables 
which capture various facets of these interactions, going beyond a simple process maximum estimate, 
as used in \cite{koziol1996changepoint}. When possible, we derive theoretical estimates to compute 
$p$-values thresholds characterizing Type-I error, and when not possible, we  propose a Monte-Carlo 
sampling procedure. We also present a new \emph{combined} test which leverages the power of all our 
proposed individual tests to provide robust detection of significant covariates, while also 
providing fine-grained control over the Type-I error rate. Our novel  combined test compares 
favorably to existing state-of-the-art procedures, serving as an effective tool for the 
exploration of clinical trial data. We evaluate our approach on both synthetic and real 
clinical trial datasets. For the purpose of this work, we have also created a synthetic benchmark 
that captures many ``corner cases'', i.e. parameter regimes where existing methods reveal their 
limitations. These benchmarks may prove  useful for other researchers to evaluate their methods.

The paper is organized as follows. In Sec. \ref{sec:response_model}, we present the 
treatment-response model and introduce some notations. Next, in Sec. \ref{sec:background}, 
we review the current state-of-the-art in covariate-response correlation analysis and covariate 
selection. Subsequently, in Sec. \ref{sec:centering}, we demonstrate a simple correction based 
on treatment response correlations which addresses many variance issues observed in current 
techniques. Then, in Sec. \ref{sec:cumulative-tests}, we present our novel individual cumulative 
response tests based on null-tests against Brownian motion, each of which is tailored to different 
features in the measured treatment response. Finally, in Sec. \ref{sec:combtest} we present an 
approach to merge these individual cumulative tests into a single combined test which is able to 
retain the performance of the individual tests while being robust to \emph{a priori} unknown 
treatment response curves. To validate our approach, we present two numerical analyses. In our 
first synthetic analysis, reported in Sec. \ref{sec:synth-exp}, we report objective results of 
significant covariate detection performance over a wide range of different treatment response 
models. In our second analysis, reported in Sec. \ref{sec:realexp}, we take three real-world 
clinical trial datasets and report the significant variables discovered by our
method.  Finally, we conclude in Sec. \ref{sec:discussion} with a discussion of the
applicability of our approach and possible avenues for future work.

\section{Drug response model}
\label{sec:response_model}
Let us now describe the mathematical framework of the drug response model. The drug
response model characterizes the observed outcome of a patient as a function of the
patient's covariates and the treatment which was administered to the patient. We will
assume that this observed outcome is binary or  real-valued measurement, 
e.g. cell counts or cholesterol levels.

Let us denote the treatment indicator as $T$. In the common setting of experimental
treatment versus placebo, $T$ is a binary variable with  $T = -1$ corresponding
to the placebo and, $T = 1$ to the experimental treatment. 
Given a vector of covariates $X$ belonging to a particular
patient, we denote the observed outcome under treatment $T = t$ as
as $R^{(t)}(X)$.

A trial dataset consists of $N$ patient records, which are assumed to be drawn 
\emph{i.i.d.} randomly, each of which contains the 
patient's covariate vector, the applied treatment, and the observed outcome, i.e.
$(X_i, T_i, R_i)_{1\leq i \leq N}$ where we define $R_i \triangleq R^{(T_i)}(X_i)$ for conciseness. 

We are interested in the detection of patient covariates which correlate
with the \emph{spread} between the experimental treatment and the placebo, 
or \emph{treatment effectiveness}
\begin{equation}
  E(X) = R^{(1)}(X) - R^{(-1)}(X).
  \label{eq:spread}
\end{equation}

Such covariates can be helpful in the understanding of the treatment action mechanism and
in the selection of patient subgroups where the treatment is the most efficient.

Of course, the spread is not directly observable in practice, as it would require two 
treatments to be carried out on the same patient, so we need special methods capable to
estimate the correlation of interest from indirect measurements.    

In the next section, we revise existing approaches that can be used  for the detection of
covariate-treatment interactions.

\section{Existing approaches}
\label{sec:background}
The problem of identifying covariate-treatment interaction is directly related
to the sub-group identification problem and methods developed for one can be
often adapted for another.  Many different approaches have been proposed to
address these problems, starting from straightforward application of the
standard multivariate regression techniques (with explicit modelling of
the treatment-covariate interactions), to more advanced models such as
modified covariates \cite{tian2014simple}, outcome weighted learning
\cite{ZZR2012} or change point statistics analysis \cite{koziol1996changepoint}.
In \cite{Tian:2011aa}, Tian \& Tibshirani describe an adaptive index model
which can be used for risk stratification or sub-group selection. Other examples
of decision tree based algorithms are model based recursive partitioning
\cite{Zeileis:2008aa}, SIDES method based differential effect search
\cite{Lipkovich:2011aa}, virtual twins method \cite{Foster:2011aa}, subgroup
analysis via recursive partitioning, combined additive and tree based regression
\cite{Dusseldorp:2010aa} and qualitative interaction trees
\cite{Dusseldorp:2014aa}. 

In the following subsection, we describe standard statistical
procedures as well as existing state of the art approaches that can be used to detect
features correlated with the treatment effect.

\subsection{Linear Regression Test}
\label{sec:lineartest}
One na{\"i}ve approach for the detection of significant covariates would be to simply apply a 
linear regression to the observed response variables and study the magnitude of the 
regression coefficients learned for each covariate, using, for instance, the F-test
or $R^2$ values. To construct such a regression, one first defines a linear observational
model for the measured response as a function of the covariate-treatment pair $(X, T)$,
\begin{equation}
  R = \alpha^TX + T\cdot\beta^TX + \epsilon,
\label{eq:linregtest}
\end{equation}
where $\epsilon$ is a centered random variable independent w.r.t. $(X,T)$. 
The vector of regression coefficients 
$\alpha$ describes the \emph{global} impact of patient covariates on the outcome,
irrespective of the applied treatment. This term is often referred to as the \emph{trend} of
the treatment response, and does not contain any information on treatment effectiveness. 
The significant coefficients the vector $\beta$ can be used as indicators of covariates which 
correlate with the treatment effect. In the univariate context, when we analyse
a particular variable $X^j$, (\ref{eq:linregtest}) is simplified to  a simple linear regression with only two
terms:        
\begin{equation}
  R = \alpha_j^TX^j + T\cdot\beta_j^TX^j + \epsilon.
\label{eq:linregtest}
\end{equation}

\subsection{Modified Outcome}
\label{sec:modoutcome}
The problem with the statistical test for the observational model defined in Eq. \eqref{eq:linregtest} 
one must estimate the coefficients $\alpha$ of the trend term jointly with the coefficients of $\beta$. 
Knowledge of $\alpha$ does not aid in the detection of covariate/treatment interaction, and it harms
the estimation of $\beta$, the true variable of interest, by introducing additional variance.

In the modified outcome approach of \cite{tian2014simple}, the authors propose a simple modification to
the observed response prior to regression which removes the effect of the trend term, allowing for the
direct estimation of the expected treatment effectiveness given a set of patient covariates,
$\mathbb{E}\left[ E(x) | X = x\right]$. The modification of the outcome, in the case of two 
treatments, consists of a multiplication of the observed response with the treatment indicator to
create the \emph{modified outcome}, $Y \triangleq R \times T$. In \cite{tian2014simple}, it is shown
that $\beta$ can then be estimated by a regression performed according to the observational model,
\begin{equation}
  Y = \beta^T X + \epsilon.
\label{eq:modoutcome}
\end{equation}

Due to the nature of clinical trials, for each draw of $X$, we have only one observation of $R$ which is drawn from
one of the potential arms of the trial at random (e.g. treatment or placebo). Thus, the modified outcome variable
$Y$ may have a very large variance. Indeed, even if we assume that the per-treatment responses have small variance, the modified outcome $Y = R\times T$ may  then contain multiple distinct tight modes, 
corresponding to each realization of the treatment variable.

These modes have the effect of inflating the variance of $Y$,
\begin{align}
\mathbf{var}[Y|X] &= \mathbb{E}[\mathbf{var}[Y|X,T]~|X] 
 + \mathbf{var}[\mathbb{E}[Y|X,T]~|X].
\label{eq:cond_var}
\end{align}    
Even if $\mathbf{var}[Y|X, T]$ is small for each treatment arm, 
thus causing the expectation of the variance to be small as well, the variance of the expectation can be arbitrarily
large, especially if the treatment in question has a very strong global effect
or if there is a strong shift in outcome distribution. We propose one possible approach to
counteract this variance inflation in Sec. \ref{sec:centering}.

\subsection{Outcome weighted learning}
\label{sec:owl}

The outcome weighted learning (OWL)~\cite{ZZR2012} approach was initially proposed for the 
identification of
patient sub-groups, but similarly to modified covariates it can be easily adapted for the
detection of covariate treatment interactions. The idea of this approach is based on the
construction of a classifier $f(x)$ for the following weighted classification problem         
\begin{equation}
 \underset{f}{\mbox{argmin}}\sum_{i=1}^{N} w_i L(T_i, f(X_i)),\mbox{where} 
 \label{eq_owl}
\end{equation}
$L$ is a classification loss function (hinge loss, for example)  and $w_i$ are positive weights
obtained from observed outcomes $R_i$, for example, $w_i = R_i - \mbox{min}_{i = 1..N} R_i$. It
wouldn't make any sense to try to predict $T$ from $X$ since by construction, $T$ is
generated to be independent of $X$.  However, when we introduce weights, the classifier
tries to separate above all, treatment and placebo patients with high outcome values, and
it might be possible, if there is a pattern in patient to treatment response.
Interestingly, if we adapt this to the univariate case, this approach becomes
equivalent to modified outcome since basically we interested in a correlation
between $T$ and $X^j$ weighted by $R$ which is nothing else than the
correlation between $RT$ and $X^j$. 

\subsection{Discontinuous Treatment Response}
\label{sec:max-cumulative}
Until now, we have discussed the use of linear models for the detection of covariate/treatment 
interaction. However, this is a very simple assumption to make about the treatment response. 
In some cases, a strong non-linear, discontinuous thresholding effect can be the dominant 
feature in the covariate-treatment response curve, as reported by Koziol \& Wu in \cite{koziol1996changepoint}
for the case of erythropoietin treatment (r-HuEPO) for the prevention of post-surgery blood transfusion. 
Here the authors observed sharp cutoffs in treatment effectiveness as measured against 
baseline hemoglobin levels. 

In order to detect responsive patient profiles for r-HuEPO treatment, the authors of \cite{koziol1996changepoint}
proposed to build a stochastic cumulative process test to detect the change-point in measured baseline hemoglobin. 
This univariate test is constructed by building a test around a \emph{cumulative process} description of the 
treatment effectiveness. Specifically, both the placebo and r-HuEPO treatment response curves were sorted according 
to the value of the measured baseline hemoglobin for each patient record. Subsequently, a cumulative sum was taken
for each treatment, and a test was constructed to observe the statistical significance of the difference between
these cumulative sums under a specified threshold of baseline hemoglobin. 

To construct the test itself, \cite{koziol1996changepoint} made the observation that under the 
null-hypothesis ($\mathcal{H}_0$) of no covariate-treatment interaction, the treatment effectiveness should behave as a
random walk when the scale of the covariate is mapped to $[0,1]$, as the observed response would, in this case,
be independent of the covariate and its ordering. In the limit of of $N\rightarrow\infty$, this random
walk converges to a Brownian motion process. 

If we denote this random process under $\mathcal{H}_0$ as $W$, then the detection of 
statistically significant treatment effectiveness amounts to the detection of the measured
cumulative process significantly diverging from $W$. In \cite{koziol1996changepoint}, the 
authors proposed a statistical test that was constructed around the \emph{maximum} value of 
the measured cumulative process as compared to the most probable maximum value according to $W$.
In Sec. \ref{sec:cumulative-tests}, we will demonstrate a version of Koziol \& Wu's 
cumulative test, but within the modified outcome framework, and we will also 
present a series of new tests which are also based on the comparison between cumulative
response and random process null-hypotheses.    

\section{Proposed Approach}
Similar to the approach of~\cite{koziol1996changepoint} which we described in 
Sec.~\ref{sec:max-cumulative}, we focus on univariate detection of significant covariates through
statistical tests based on a null-hypothesis of random walks. Since real-world trial data can
contain many pathological and idiosyncratic response features, we go further than 
\cite{koziol1996changepoint}
by introducing, in Sec. \ref{sec:cumulative-tests},
a set of statistical tests which are designed to detect different features
in the underlying response signal. Since it is impossible to predict \emph{a priori} what 
the best hypothesis of the trial data response should be, we also propose the use of a 
\emph{combined} statistical test. In this way, we are able to utilize our proposed 
statistical tests as \emph{feature detectors}, whose outputs, taken as a whole, create a 
robust description of covariate significance.

Because of the specific structure of the test 
framework, namely repeated tests on individual variables under controlled randomization, we 
are able to do better than simple Bonferoni correction \cite{Dun1961} for combining $p$-values 
obtained over multiple tests. We note that the combined test we 
propose in Sec. \ref{sec:combtest} can be utilized independently of the specific tests we 
construct, and can be seen as a general procedure for obtaining robust predictions of covariate
significance in the setting where one posses many possible statistical tests but requires an
interpretation of the aggregate results.

First, however, we turn our attention to transformations of the 
raw response data. As first shown in the modified outcome approach of \cite{tian2014simple}, such 
transformations can lead to significant improvements in the detection of treatment
response. We will show how one such transformation can both improve the signal-to-noise
ratio for significant covariate detection, and also leads to new significance test for
the cumulative process approach.
    
\subsection{Centered Treatment Response}
\label{sec:centering}         
As discussed earlier in Sec. \ref{sec:modoutcome}, it is possible that the separation of the
treatment response curves can induce high variance in the treatment effectiveness, even when
the per-treatment response variance is small. To counteract this effect, we propose the use of 
a per-treatment centering, removing the empirical average of the response conditioned
on the treatment applied. This approach is similar in spirit to that of efficiency augmentation
\cite{tian2014simple}. In the efficiency augmentation approach,
one attempts to reduce estimator variance by removing the trend  
$\mathbb{E}[R|X]$, while in our approach we remove
$\mathbb{E}[R|T]$. In general, it is possible to combine both approaches,
however, we consider only per-treatment centering since explicit model fitting of $R(X)$ may represent a significant risk of over-fitting on 
trials with limited enrollment, where selecting a trend model \emph{a priori} 
may introduce unnecessary bias. 

For per-treatment centering, 
given the set of observed trial data $(X_i, T_i, R_i)$, we modify the measured
response,
\begin{equation}
  \widetilde{R}_i \triangleq
  R_i - \mathbb{E}[R~|~T = T_i],
\end{equation}
where $\mathbb{E}[R~|~T = T_i]$ is simply the empirical mean of all observed responses for a 
given treatment $T_i$.
Next, for the case of two-arm trials, we apply the same modified outcome approach of 
\cite{tian2014simple} to remove the  trend term by taking the difference of the two treatments,
\begin{equation}
  \widetilde{Y}_i \triangleq
  \widetilde{R}_i \times T_i.
\end{equation}

At first glance, such per-treatment centering would seem to remove the treatment effectiveness
signal. However, we note that we are not interested in the magnitude of the treatment
effectiveness itself, but rather the correlation of the treatment effectiveness with the
covariate of study. This correlation is preserved by the centering, and becomes more 
easily detectable as the variance of the outcome terms is reduced. 

To help motivate our choice of per-treatment centering, we first demonstrate that 
per-treatment centering minimizes the variance of the modified outcome.
%
%
\begin{lemma}
\label{lemma:var-reduce}
Given a randomized trial with each treatment chosen i.i.d. with non-zero probability 
and independently of $X$, 
for the set of all possible modified outcomes of the form $\widehat{Y} = T\cdot(R + f(T))$,
where $f(T)$ is an arbitrary function of the treatment, the function
$$ f(T) = -\mathbb{E}[R | T]$$ 
provides the minimum achievable variance, 
$$\mathbf{var}[\widehat{Y}] = \mathbb{E}[T^2 \cdot \mathbf{var}[R|T]].$$
\end{lemma}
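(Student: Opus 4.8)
The plan is to decompose the variance of $\widehat{Y}$ by conditioning on the treatment $T$ via the law of total variance, which cleanly separates the contribution of $f$ from the irreducible noise. Writing
$$\mathbf{var}[\widehat{Y}] = \mathbb{E}\bigl[\mathbf{var}[\widehat{Y}\mid T]\bigr] + \mathbf{var}\bigl[\mathbb{E}[\widehat{Y}\mid T]\bigr],$$
I would first observe that once $T$ is fixed, $f(T)$ becomes a constant and $T$ itself a constant multiplier, so that $\mathbf{var}[\widehat{Y}\mid T] = T^2\,\mathbf{var}[R\mid T]$. Taking expectations gives $\mathbb{E}[\mathbf{var}[\widehat{Y}\mid T]] = \mathbb{E}[T^2\,\mathbf{var}[R\mid T]]$, and the crucial point is that this term does not depend on $f$ at all.

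Consequently, minimizing $\mathbf{var}[\widehat{Y}]$ over $f$ reduces to minimizing the single remaining term $\mathbf{var}[\mathbb{E}[\widehat{Y}\mid T]]$. Since $\mathbb{E}[\widehat{Y}\mid T] = T\bigl(\mathbb{E}[R\mid T] + f(T)\bigr)$ and any variance is non-negative, the smallest value this term can attain is zero. I would then exhibit that the proposed choice $f(T) = -\mathbb{E}[R\mid T]$ makes the inner expression $T\bigl(\mathbb{E}[R\mid T]+f(T)\bigr)$ identically zero, hence a degenerate constant, so that $\mathbf{var}[\mathbb{E}[\widehat{Y}\mid T]] = 0$. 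Combining the two terms yields $\mathbf{var}[\widehat{Y}] = \mathbb{E}[T^2\,\mathbf{var}[R\mid T]]$, which is exactly the claimed minimum.

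The argument is short and presents no serious technical obstacle; the only points requiring care are the hypotheses. The assumption that $T$ is drawn independently of $X$ is what underlies the clean conditioning and guarantees $\mathbf{var}[R+f(T)\mid T] = \mathbf{var}[R\mid T]$, while the requirement that each treatment occurs with non-zero probability ensures that the conditional quantities $\mathbb{E}[R\mid T]$ are well defined for every realized value of $T$. I would also note that the minimizer is not unique—any $f$ for which $T\bigl(\mathbb{E}[R\mid T]+f(T)\bigr)$ is almost surely constant attains the same optimum—but the particular centering choice $f(T)=-\mathbb{E}[R\mid T]$ is the natural one and realizes the stated bound.
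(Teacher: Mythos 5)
Your proposal is correct and follows essentially the same route as the paper: both apply the law of total variance conditioned on $T$, observe that $\mathbb{E}[\mathbf{var}[\widehat{Y}\mid T]] = \mathbb{E}[T^2\,\mathbf{var}[R\mid T]]$ is independent of $f$, and then note that the remaining term $\mathbf{var}\bigl[T\,(\mathbb{E}[R\mid T]+f(T))\bigr]$ is non-negative and vanishes at $f(T) = -\mathbb{E}[R\mid T]$. Your added remark on the non-uniqueness of the minimizer is a nice observation that the paper omits, but the core argument is identical.
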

\begin{proof}
See Appendix \ref{apdx:proof-var-reduce}.
\end{proof}

Additionally, we can observe that the original modified outcome of $Y_{\rm mod} = T\cdot R$,
as proposed in \cite{tian2014simple}, will always have a larger variance, except in the 
case that the per-treatment responses are already centered.
%
%
\begin{lemma}
\label{lemma:two-trial-compare}
Given a randomized two-treatment trial, $T \in \left\{\pm 1 \right\}$, whose measured data
has the empirical mean and variance $\mu_T$ and $\sigma_T^2$, respectively, where each 
treatment is chosen with probability $\pi_T$, then  
$\mathbf{var}[\widetilde{Y}] \leq  \mathbf{var}[Y_{\rm mod}]$, since 
$\mathbf{var}[\widetilde{Y}] = \gamma  \cdot \mathbf{var}[Y_{\rm mod}]$ where 
$0 \leq \gamma \leq 1$. 
\end{lemma}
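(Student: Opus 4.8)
The plan is to reduce everything to the variance formula already supplied by Lemma~\ref{lemma:var-reduce} and then compute $\mathbf{var}[Y_{\rm mod}]$ directly by conditioning on the treatment arm, so that the two quantities can be compared term by term. Observe first that $\widetilde{Y}$ is exactly the minimizer $\widehat{Y} = T\cdot(R - \mathbb{E}[R\mid T])$ of Lemma~\ref{lemma:var-reduce} with $f(T) = -\mathbb{E}[R\mid T]$. Since $T\in\{\pm 1\}$ forces $T^2 = 1$, I would read off directly
\begin{equation*}
  \mathbf{var}[\widetilde{Y}] = \mathbb{E}\!\left[T^2\,\mathbf{var}[R\mid T]\right] = \mathbb{E}\!\left[\mathbf{var}[R\mid T]\right] = \pi_{+1}\sigma_{+1}^2 + \pi_{-1}\sigma_{-1}^2,
\end{equation*}
where the final equality merely expands the expectation over the two arms using the empirical per-arm variances $\sigma_{\pm 1}^2$ and selection probabilities $\pi_{\pm 1}$.

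Next I would compute $\mathbf{var}[Y_{\rm mod}]$ for $Y_{\rm mod} = T\cdot R$ from scratch. Using $T^2 = 1$ again gives $\mathbb{E}[Y_{\rm mod}^2] = \mathbb{E}[R^2] = \pi_{+1}(\sigma_{+1}^2+\mu_{+1}^2) + \pi_{-1}(\sigma_{-1}^2+\mu_{-1}^2)$, while conditioning on $T$ gives $\mathbb{E}[Y_{\rm mod}] = \pi_{+1}\mu_{+1} - \pi_{-1}\mu_{-1}$. Subtracting the square of the mean from the second moment, the key step is to collapse the resulting cross terms using $\pi_{+1} + \pi_{-1} = 1$ (so that, for instance, $\pi_{+1}\mu_{+1}^2 - \pi_{+1}^2\mu_{+1}^2 = \pi_{+1}\pi_{-1}\mu_{+1}^2$). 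After this regrouping the mean-dependent remainder assembles into a perfect square, yielding the identity
\begin{equation*}
  \mathbf{var}[Y_{\rm mod}] = \underbrace{\pi_{+1}\sigma_{+1}^2 + \pi_{-1}\sigma_{-1}^2}_{=\,\mathbf{var}[\widetilde{Y}]} + \pi_{+1}\pi_{-1}\,(\mu_{+1}+\mu_{-1})^2.
\end{equation*}

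With this identity the conclusion is immediate: the extra term $\pi_{+1}\pi_{-1}(\mu_{+1}+\mu_{-1})^2$ is nonnegative, so $\mathbf{var}[\widetilde{Y}]\le\mathbf{var}[Y_{\rm mod}]$, and setting $\gamma = \mathbf{var}[\widetilde{Y}]/\mathbf{var}[Y_{\rm mod}]$ gives the claimed representation with $0\le\gamma\le 1$. This also pins down the equality case asserted in the surrounding text: $\gamma = 1$ precisely when $\mu_{+1} = -\mu_{-1}$, which in particular covers the situation where both per-treatment responses are already centered. The proof is essentially a direct computation, so the only real obstacle is bookkeeping — carrying out the regrouping of the mean terms carefully so the perfect square emerges. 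In particular I would double-check the sign, since the $-1$ factor on the $T=-1$ arm makes it tempting to expect $(\mu_{+1}-\mu_{-1})^2$, whereas the correct combination is $(\mu_{+1}+\mu_{-1})^2$.
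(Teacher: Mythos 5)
Your proof is correct, and at the structural level it takes the same route as the paper's: invoke Lemma~\ref{lemma:var-reduce} to get $\mathbf{var}[\widetilde{Y}] = \mathbb{E}[\mathbf{var}[R\mid T]]$, compute $\mathbf{var}[Y_{\rm mod}]$ by conditioning on the arm, and exhibit the difference as a nonnegative term depending only on the per-arm means. Two differences are worth recording. First, you carry general arm probabilities $\pi_{\pm 1}$ throughout, whereas the paper's computation silently sets $\pi_{\pm 1} = \tfrac12$ even though its statement introduces $\pi_T$; your excess term $\pi_{+1}\pi_{-1}(\mu_{+1}+\mu_{-1})^2$ specializes to the paper's prefactor $\tfrac14$.

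Second, and more substantively: your sign is right and the paper's is wrong. The paper evaluates the second term of the total-variance decomposition as
\begin{equation*}
\mathbf{var}\bigl[T\cdot\mathbb{E}[R\mid T]\bigr] = \tfrac{1}{4}(\mu_1 - \mu_{-1})^2,
\end{equation*}
but $T\cdot\mathbb{E}[R\mid T]$ takes the values $\mu_1$ and $-\mu_{-1}$ with probability $\tfrac12$ each, so its variance is $\tfrac14(\mu_1 + \mu_{-1})^2$ --- exactly what your direct moment expansion produces. A one-line sanity check confirms this: if $R = T$ deterministically, then $Y_{\rm mod} = T^2 \equiv 1$ has zero variance, consistent with $(\mu_1+\mu_{-1})^2 = 0$ but contradicting the paper's $\tfrac14(\mu_1-\mu_{-1})^2 = 1$. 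The corrected sign is also the one consistent with the paper's own motivation in Sec.~\ref{sec:modoutcome}, where the variance inflation of $Y_{\rm mod}$ is attributed to a global shift of the outcome distribution rather than to the treatment contrast. The lemma's conclusion ($0 \le \gamma \le 1$) survives either way, since both expressions are nonnegative, but the equality characterization changes: $\gamma = 1$ precisely when $\mu_{+1} = -\mu_{-1}$ (means symmetric about zero), as you note, not when $\mu_1 = \mu_{-1}$ as the paper's appendix concludes; both versions agree that already-centered per-arm responses give equality. One small point of rigor applying to both proofs: defining $\gamma$ as the ratio $\mathbf{var}[\widetilde{Y}]/\mathbf{var}[Y_{\rm mod}]$ requires $\mathbf{var}[Y_{\rm mod}]>0$; when it vanishes, your identity forces $\mathbf{var}[\widetilde{Y}]=0$ as well, and any $\gamma\in[0,1]$ serves.
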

\begin{proof}
See Appendix \ref{apdx:proof-lemma-two-trial-compare}.
\end{proof}

From Lemma \ref{lemma:two-trial-compare}, we see that, in the case of two treatment trials, the 
variance reduction provided by per-treatment response centering becomes more pronounced as 
the separation in expected response between the trials increases and the intra-treatment variance
decreases. This shows the corrective effect of centering in correcting for the problem of 
multi-modal response distributions, as discussed earlier in Sec. \ref{sec:modoutcome}.

Now that we have established that per-treatment centering is an effective form of variance reduction
for the modified outcome, the question of treatment-response detection remains. We now investigate 
the effect of centering on the correlation between the modified response and treatment.
\begin{lemma}
\label{lemma:centering}
The centering $\widetilde{R} = R - \mathbb{E}[R | T]$ only alters the correlation between
treatment and outcome for a specified covariate by a constant amount which does not depend on
$X$,
$$\mathbf{cov}[\widetilde{R}, T | X] = \mathbf{cov}[R, T | X] + \mathcal{C}_{R,T}.$$

\end{lemma}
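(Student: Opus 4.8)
The plan is to compute $\mathbf{cov}[\widetilde{R}, T | X]$ directly and isolate the $X$-independent term. First I would substitute the definition $\widetilde{R} = R - \mathbb{E}[R | T]$ and exploit the bilinearity of the conditional covariance to split the expression into two pieces:
$$\mathbf{cov}[\widetilde{R}, T | X] = \mathbf{cov}[R, T | X] - \mathbf{cov}\bigl[\mathbb{E}[R | T], T | X\bigr].$$
This immediately suggests the identification $\mathcal{C}_{R,T} = -\mathbf{cov}\bigl[\mathbb{E}[R | T], T | X\bigr]$, so the entire content of the lemma reduces to showing that this correction term does not depend on $X$.

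The key observation driving the argument is that $\mathbb{E}[R | T]$ is a function of $T$ alone; writing $g(T) \triangleq \mathbb{E}[R | T]$, both $g(T)$ and the product $g(T)\cdot T$ are measurable functions of $T$ only. I would then invoke the randomization assumption stated in the setup, namely that the treatment $T$ is drawn independently of the covariates $X$. Under this independence, conditioning on $X$ has no effect on any expectation of a function of $T$, so that $\mathbb{E}[g(T) | X] = \mathbb{E}[g(T)]$, $\mathbb{E}[T | X] = \mathbb{E}[T]$, and $\mathbb{E}[g(T)\cdot T | X] = \mathbb{E}[g(T)\cdot T]$.

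Expanding the correction term via $\mathbf{cov}[g(T), T | X] = \mathbb{E}[g(T)\cdot T | X] - \mathbb{E}[g(T) | X]\,\mathbb{E}[T | X]$ and substituting the three identities above collapses the conditional covariance to the unconditional one,
$$\mathbf{cov}\bigl[\mathbb{E}[R | T], T | X\bigr] = \mathbb{E}\bigl[\mathbb{E}[R | T]\cdot T\bigr] - \mathbb{E}\bigl[\mathbb{E}[R | T]\bigr]\,\mathbb{E}[T] = \mathbf{cov}\bigl[\mathbb{E}[R | T], T\bigr],$$
which is manifestly a constant independent of $X$. Setting $\mathcal{C}_{R,T} \triangleq -\mathbf{cov}\bigl[\mathbb{E}[R | T], T\bigr]$ completes the argument. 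The only real subtlety — and thus the step I would be most careful about — is the clean use of the independence of $T$ and $X$ to strip the conditioning off a function of $T$; everything else is bilinearity of covariance. I would also note that this argument nowhere uses the two-arm structure, so the identity holds for general treatment indicators.
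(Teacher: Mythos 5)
Your proof is correct and follows essentially the same route as the paper's: split off $\mathbf{cov}\bigl[\mathbb{E}[R|T], T \mid X\bigr]$ by bilinearity of the conditional covariance, then use the randomization assumption $T \perp X$ to argue that expectations of functions of $T$ are unaffected by conditioning on $X$, so this correction term is a constant. The only difference is that the paper additionally invokes $\mathbb{E}[T]=0$ to kill the product term $\mathbb{E}\bigl[\mathbb{E}[R|T]\mid X\bigr]\cdot\mathbb{E}[T\mid X]$ and arrives at $\mathcal{C}_{R,T} = -\mathbb{E}\bigl[T\cdot\mathbb{E}[R|T]\bigr]$, whereas you keep that term and collapse the whole conditional covariance to the unconditional one, $\mathcal{C}_{R,T} = -\mathbf{cov}\bigl[\mathbb{E}[R|T], T\bigr]$, which gives the same conclusion without needing centered treatment indicators.
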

\begin{proof}
See Appendix \ref{apdx:proof-lemma-centering}.
\end{proof}
Since covariance between outcome and treatment is modified only by a constant which does
not depend on the covariate $X$, we see that the correlation conditioned on covariates
is not lost by introducing a per-treatment centering, only translated. 

Additionally, we can see that the centered modified outcome removes the effect of
global effects on the covariate-conditioned signal. 
\begin{lemma}
\label{lemma:expected-covariance}
For $\mathbb{E}[T] = 0$,
centering removes global dependence between the measured outcome and treatment,
$\mathbf{cov}[\widetilde{R},T] = 0.$
\end{lemma}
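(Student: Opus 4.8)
The plan is to reduce everything to the law of total expectation, exploiting the fact that per-treatment centering forces the conditional mean of the response given the treatment to vanish. Writing $\widetilde{R} = R - \mathbb{E}[R\mid T]$, the single observation that drives the whole argument is
\begin{equation}
\mathbb{E}[\widetilde{R}\mid T] = \mathbb{E}[R\mid T] - \mathbb{E}[R\mid T] = 0,
\label{eq:cond-mean-zero}
\end{equation}
so that $\widetilde{R}$ is conditionally centered within each treatment arm.

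First I would expand the target covariance by definition, $\mathbf{cov}[\widetilde{R},T] = \mathbb{E}[\widetilde{R}\,T] - \mathbb{E}[\widetilde{R}]\,\mathbb{E}[T]$. The hypothesis $\mathbb{E}[T]=0$ kills the second term outright, leaving $\mathbf{cov}[\widetilde{R},T] = \mathbb{E}[\widetilde{R}\,T]$. (Alternatively one notes $\mathbb{E}[\widetilde{R}]=0$ by a single application of the tower property, which would render the $\mathbb{E}[T]=0$ hypothesis unnecessary; I would retain the stated hypothesis for transparency, since it matches the global, covariate-free framing of the claim.)

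Next I would evaluate $\mathbb{E}[\widetilde{R}\,T]$ by conditioning on $T$ and pulling the now-deterministic factor $T$ outside the inner expectation:
\begin{equation}
\mathbb{E}[\widetilde{R}\,T] = \mathbb{E}\big[\,T\,\mathbb{E}[\widetilde{R}\mid T]\,\big] = \mathbb{E}[\,T\cdot 0\,] = 0,
\end{equation}
where the middle equality is precisely \eqref{eq:cond-mean-zero}. Combining the two displays yields $\mathbf{cov}[\widetilde{R},T]=0$, as claimed.

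There is no genuine obstacle here; the entire content lies in recognizing \eqref{eq:cond-mean-zero} and applying iterated expectation correctly. The only point demanding care is the bookkeeping of the conditioning: one must condition on $T$ (not on $X$) so that $T$ becomes a constant inside the inner expectation and the centering term exactly cancels the raw conditional mean. This is also what distinguishes the present global statement from Lemma \ref{lemma:centering}, where conditioning on $X$ leaves a residual constant $\mathcal{C}_{R,T}$ rather than zero.
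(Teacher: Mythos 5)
Your proof is correct and takes essentially the same route as the paper's: both reduce $\mathbf{cov}[\widetilde{R},T]$ to the cross-moment $\mathbb{E}[\widetilde{R}\,T]$ using $\mathbb{E}[T]=0$ and then kill that term by conditioning on $T$ and applying the law of total expectation; your packaging via $\mathbb{E}[\widetilde{R}\mid T]=0$ is just a tidier form of the paper's cancellation of $\mathbb{E}[T\cdot R]$ against $\mathbb{E}[T\cdot\mathbb{E}[R\mid T]]$. Your parenthetical observation that $\mathbb{E}[\widetilde{R}]=0$ by the tower property, so the hypothesis $\mathbb{E}[T]=0$ is actually dispensable, is correct and is a small sharpening the paper does not record.
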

\begin{proof}
    See Appendix \ref{apdx:proof-lemma-expected-covariance}.
\end{proof}
\begin{corollary}
\label{corollary:partial-corr}
Controlling for the ordering of covariates, the partial correlation between the
centered treatment response and the treatment indicator is zero, 
$\rho_{R,T \cdot X} \geq \rho_{\widetilde{R},T\cdot X}  = 0$.
\end{corollary}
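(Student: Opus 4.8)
The plan is to reduce the partial correlation to an ordinary covariance and then invoke Lemma~\ref{lemma:expected-covariance}. Recall that the partial correlation $\rho_{\widetilde{R},T\cdot X}$ is the correlation between the parts of $\widetilde{R}$ and of $T$ that remain after removing their respective predictions from $X$. First I would exploit the randomization assumption: since each treatment is assigned independently of $X$, the variable $T$ is independent of $X$, so the best predictor of $T$ from $X$ is the constant $\mathbb{E}[T]$ and the corresponding residual is $T-\mathbb{E}[T]$. Under the hypothesis $\mathbb{E}[T]=0$ this residual is simply $T$ itself, which collapses the ``controlling for $X$'' operation on the treatment side to a no-op.

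Next I would handle the response side. Write the residual of $\widetilde{R}$ as $\widetilde{R}-g(X)$, where $g(X)$ is the prediction of $\widetilde{R}$ from $X$ (either the conditional mean $\mathbb{E}[\widetilde{R}\mid X]$ or its linear projection); in both cases $g$ is a function of $X$ alone. The numerator of the partial correlation is then $\mathbf{cov}[\widetilde{R}-g(X),\,T]$. Because $g(X)$ is a function of $X$ and $T$ is independent of $X$, the cross term $\mathbf{cov}[g(X),T]$ vanishes, so the numerator equals $\mathbf{cov}[\widetilde{R},T]$. Lemma~\ref{lemma:expected-covariance} gives $\mathbf{cov}[\widetilde{R},T]=0$ whenever $\mathbb{E}[T]=0$, and since the denominator (the product of the two residual standard deviations) is strictly positive for nondegenerate data, I conclude $\rho_{\widetilde{R},T\cdot X}=0$.

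For the comparison with the uncentered response I would use the decomposition $R=\widetilde{R}+\mathbb{E}[R\mid T]$. Running the same residual argument on $R$ shows that its partial-correlation numerator equals $\mathbf{cov}[R,T]=\mathbf{cov}[\widetilde{R},T]+\mathbf{cov}[\mathbb{E}[R\mid T],T]=\mathbf{cov}[\mathbb{E}[R\mid T],T]$, i.e. the uncentered partial correlation differs from the centered one by exactly the global treatment-effect term. For a two-arm trial with $\mathbb{E}[T]=0$ this term reduces to a positive multiple of $\mu_{+}-\mu_{-}$, so its sign is that of the average treatment benefit; under the usual convention that the experimental arm is (weakly) beneficial it is nonnegative, which yields the stated ordering $\rho_{R,T\cdot X}\ge\rho_{\widetilde{R},T\cdot X}=0$. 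This makes precise the claim that centering subtracts off a spurious global correlation and leaves exactly zero.

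The main obstacle I anticipate is bookkeeping around the definition of partial correlation rather than any deep computation: one must be careful that ``removing the effect of $X$'' leaves a residual that is still a function of $X$ (so that independence of $T$ from $X$ can be applied) and that the randomization hypothesis is precisely what makes the treatment-side residual trivial. A secondary subtlety is the sign in the inequality, which is not universally nonnegative but is governed by the direction of the global effect; stating the orientation convention (or reading the inequality in magnitude) removes the ambiguity.
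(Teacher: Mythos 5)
Your proposal is correct and follows essentially the same route as the paper: the paper gives no standalone proof of Corollary~\ref{corollary:partial-corr}, presenting it as an immediate consequence of Lemma~\ref{lemma:expected-covariance}, and your argument simply makes that implication rigorous by using randomization ($T \perp X$, $\mathbb{E}[T]=0$) to trivialize the treatment-side residual and to annihilate $\mathbf{cov}[g(X),T]$, so the partial-correlation numerator reduces to $\mathbf{cov}[\widetilde{R},T]=0$. Your closing caveat is also well taken: the printed inequality $\rho_{R,T\cdot X}\geq \rho_{\widetilde{R},T\cdot X}$ holds only under a sign convention on the global effect (or in absolute value), since the uncentered numerator $\mathbf{cov}[R,T]=\mathbb{E}\left[T\cdot\mathbb{E}[R|T]\right]$ can be negative when the experimental arm is on average harmful --- a point the paper passes over silently.
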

From Lemma~\ref{lemma:expected-covariance}, we make the observation that since the global
dependency is removed, then only conditioning on $X$ introduces a dependency between 
the treatment and centered response. Without centering, 
$\mathbf{cov}[R, T] = \mathbb{E}[T\cdot \mathbb{E}[R|T]]$.
This leads to Corollary~\ref{corollary:partial-corr}, which demonstrates that when using centered
treatment response, all of the dependency between $R$ and $T$ is entirely mediated by the
covariate $X$. In other words, for centered treatment response, we see that there remains no
explanation of the covariance other than the effect of the covariate under investigation.
This property provides a greater sensitivity when constructing significance tests, as we can be
assured that the variability being observed in $\widetilde{Y}$ is only due to $X$.

\subsection{Cumulative Tests}
\label{sec:cumulative-tests}
As discussed in Sec. \ref{sec:max-cumulative}, nonlinearities in the treatment response signal 
when conditioned on patient covariates can lead to poor detection of covariate-treatment correlations
when using linear regression fits. Instead, methods such as \cite{koziol1996changepoint} propose
the use of non-linear tests based on a theory of random walks. We now discuss a set of novel 
tests for the detection of significant levels of covariate-treatment interaction according to
$p$-value. 

We first introduce the core transformation utilized by all of our proposed significance tests. 
Given some modified outcomes $Y$, centered or not, and a covariate of interest $X^j$, a sorting permutation
$\mathbf{s} = [s_1, \dots, s_N]$ is constructed such that 
$\bar{X}^j = [X^j_{s_1}, X^j_{s_2}, \dots, X^j_{s_N}]$ is the vector of 
patient covariate values sorted in ascending order. Here, we assume that the covariate in question
has some interpretation that allows for a meaningful sense of ordering. However, it is still possible
to use this procedure in the case of purely categorical covariates, as long as the arbitrarily
chosen ordering remains consistent when repeated on the same covariate.

Subsequently, a cumulative response vector is 
then calculated as the cumulative sum,
\begin{equation}
  C^j_i = \sum_{k=1}^i\quad Y_{s_k}.
\end{equation}
Rather than constructing tests on the modified outcomes themselves, we evaluate
the statistics of the cumulative response process $C^j$. By using the known statistics of 
random walks, we may construct a strong set of priors on the behavior of $C^j$ 
under the null-hypothesis of no covariate-treatment interaction,
\begin{equation}
\mathcal{H}_0: Y \perp X^j.
\end{equation}
Detecting a significant interaction amounts to the detection of $C^j$ diverging significantly
from the bulk where it is most explained by a random process.

\paragraph{Maximum Value Test.}
We first turn our attention to small adaptation of the original cumulative maximum
value test of \cite{koziol1996changepoint}, generalizing their change-point detection test
to the case of interaction detection. We propose the use of the maximum value test as a 
baseline comparison when used in conjunction with the modified outcome of \cite{tian2014simple}.

Specifically, we define the maximum normalized absolute value of the cumulative response as
\begin{align}
  M = \max_{i\in[1,N]}\quad \left\lvert \frac{1}{\sqrt{N \sigma_N^2}}\cdot C_i^j \right\rvert,
\end{align}
where $\sigma_N^2$ is the sample variance taken of the entries of the realized random walk
$C^j$,
\begin{equation}
    \sigma_N^2 = \frac{1}{N-1}\sum_{i=1}^N \left(C^j_i - \sum_{k=1}^N C_k^j\right)^2.
\end{equation}
Next, as mentioned
earlier in Sec. \ref{sec:max-cumulative}, we observe that as $N \rightarrow \infty$, 
according to Donsker's theorem \cite{Don1951}, 
the cumulative process under $\mathcal{H}_0$ converges to a Weiner process. 
The distribution of the extreme values of a Wiener process is well known in the literature 
\cite{koziol1996changepoint,AL2013,BS2002}, 
\begin{align}
\mathbb{P}&\left[\left\{\max_{0\leq t \leq 1} \left\lvert W_t\right\rvert\right\} > \alpha\right] = 
4\sum_{i = 1}^\infty (-1)^{i+1}\Phi(-(2i-1)\alpha), 
\label{eq:maxcum}
\end{align}
for any $\alpha > 0$ where $\Phi$ is the standard normal cumulative distribution function (CDF). The 
final significance test is constructed from \eqref{eq:maxcum} by measuring the probability of the
observed maximum value and comparing it to the probability that the null-hypothesis of the random
walk could have produced such a result. In this case we are assuming that $N$ is large enough
such that the approximation via Weiner process is accurate.

\paragraph{Brownian Bridge Test.}
In the case of the maximum value test, we define this baseline using the standard modified outcome.
However, we would like to make use of the per-treatment response centering we propose
in Sec.~\ref{sec:centering} in order to aid in the detection of covariate-treatment 
correlation. However, by using this per-treatment centering, the cumulative process
is no longer well described by Brownian motion. Once the mean is removed, the cumulative 
process is pinned to 0 not just at $t=0$, but also at the end of the process $t=1$. An 
example of such a process is shown in Fig.~\ref{fig:brownian-excursion}.

Specifically, if we construct the cumulative response using the per-treatment 
centered response,
\begin{equation}
  \widetilde{C}^j_i = \sum_{k=1}^{i}\quad\widetilde{Y}_{s_k}
\end{equation}
then, by application of Donsker's theorem \cite{Don1951},
$\widetilde{C}^j$ converges to a Brownian bridge process, $B_t$, as $N\rightarrow\infty$ 
instead of Wiener process. We can observe this convergence by rescaling the discrete 
range of patient indices $[1, N] \mapsto [0,1]$ via the product $tN$, to find
\begin{align*}    
&\frac{1}{\sqrt{N\sigma^2}}
  \sum_{k=1}^{\lfloor tN \rfloor} \left(\widetilde{Y}_{s_k} +\left(tN - \lfloor tN\rfloor\right) \widetilde{Y}_{s_{\lfloor
  tN\rfloor+1}}\right) 
\underset{N\rightarrow \infty}{\Longrightarrow}
  (B_t)_{0\leq t\leq 1}.
\end{align*}
The same transformation can be applied to the maximum absolute value of the process, as well.

Now, according to Slutsky's theorem, we can replace $\sigma^2$ with its empirical estimate,
\begin{align}
\underset{i \in [1,N]}{\max} \left\lvert
  \frac{1}{\sqrt{N\sigma_N^2}}\cdot \widetilde{C}_i^j \right\rvert
\underset{N\rightarrow \infty}{\Longrightarrow} \underset{0\leq t \leq 1}{\max}
   \left\lvert B_t\right\rvert,
\end{align}
which shows that the scaled extremal value of the centered process converges in distribution to the
extremal of the Brownian Bridge. Thus, we are able to construct the statistical significance test
against $\mathcal{H}_0$ according to the distribution of extreme values of the Brownian Bridge
process.

And the statistical test rejection zone  can be computed according to       
\begin{align}
&\mathbb{P}\left[\left\{\underset{0\leq t \leq 1}{\max}\lvert B_t \rvert\right\} > \alpha\right]
  = 
  2\sum_{i=1}^{\infty} (-1)^{n-1} e^{-2 i^2 \alpha^2}.
  \label{eq:bridge-reject}
\end{align}
While this extremal value test is a robust statistic for significant covariate detection, there are many examples of
significant treatment-covariate interaction patterns which may be missed by constructing 
$\mathcal{H}_0$ from a Brownian Bridge process. In the next
section, we discuss some of these shortcomings and demonstrate an extremal test on Brownian Excursion
processes that succeeds where \eqref{eq:bridge-reject} fails.

\paragraph{Brownian Excursion Test.}    
The maximum absolute value is a good statistic for monotonic signals, where the
treatment effect monotonically increases or decreases with the value of the
covariate under study.
However, when the covariate does not display monotonicity, then maximum value tests may 
produce false negatives with respect to covariate significance.
An example of such a case might be a covariate for which the extreme values, low or high,
are correlated with low treatment response, while mid-range values are correlated with high 
treatment response. The max value cumulative process approach may fail in this case as a particular 
realization in this case may first decrease, rise, and then decrease again without ever reaching
a critical extremal value which would indicate a statistically significant deviation from 
$\mathcal{H}_0$. However, such a process might indeed be judged significant if the origin of the
process were shifted to the beginning of the domain correlated with positive treatment effect. Thus,
a thoroughly general, if cumbersome, approach might be to utilize a max value cumulative test 
evaluated from every possible start position.

Specifically, given a centered cumulative process for a specific covariate $\widetilde{C}$,
consider the family of circle-shifted processes constructed from it,
\begin{equation}
    \mathcal{C} = \{(\widetilde{C})_l~|~\forall l = 0, 1, \dots, N \},
\end{equation}
where 
\begin{equation}
(\widetilde{C}_i)_l = 
    \sum_{k=1}^i \widetilde{Y}_{s_{(k+l)~{\rm mod}~N}}~~.
\end{equation}

Finding the extremal value
for all possible shifts is then the solution to the double maximization, 
\begin{equation} 
    \label{eq:shift-maximization}
    \max_{\widetilde{C}\in\mathcal{C}} \max_{i \in [1, N]}~~\left|\widetilde{C}_i\right|.
\end{equation}
The solution to this maximization over shifts and time can be be equally accomplished by 
generalizing the Brownian bridge process to a ring domain and subsequently defining the 
``start'' of this process to occur at its minimum. Subsequently, finding the extremal of this 
resultant process would be equivalent to \eqref{eq:shift-maximization}. An example of this 
construction is given in Fig.~\ref{fig:brownian-excursion}.

\begin{figure}[t]
    \centering    
    \includegraphics[width=0.49\textwidth]{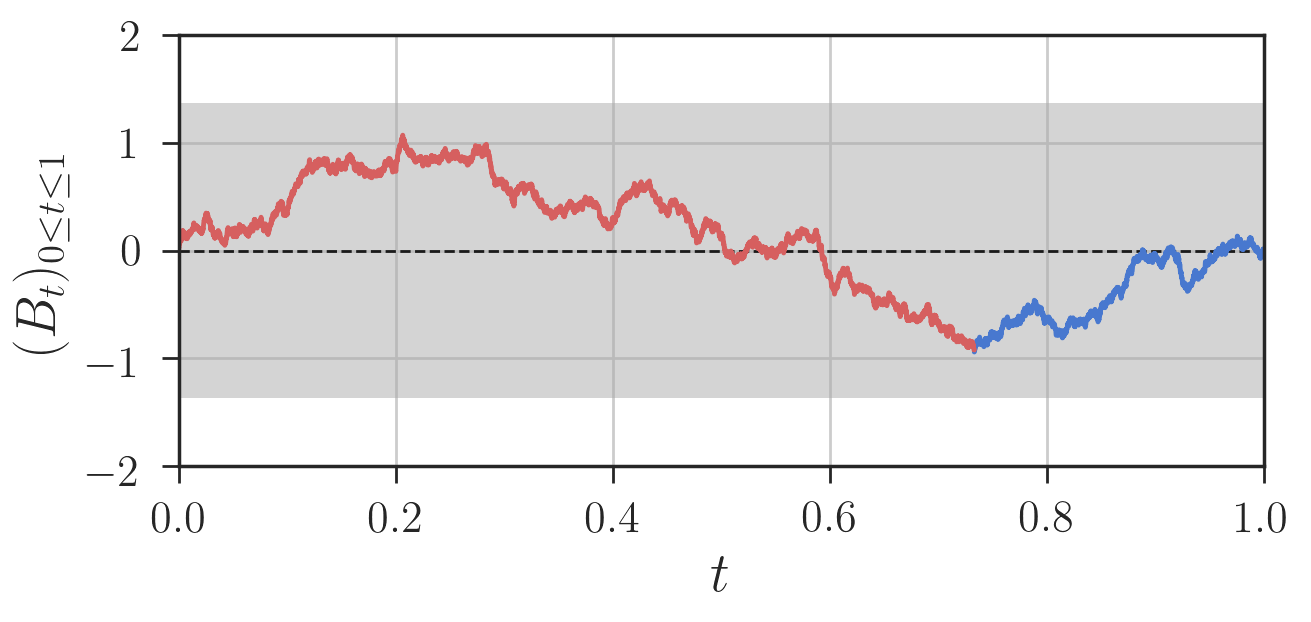}
    \includegraphics[width=0.49\textwidth]{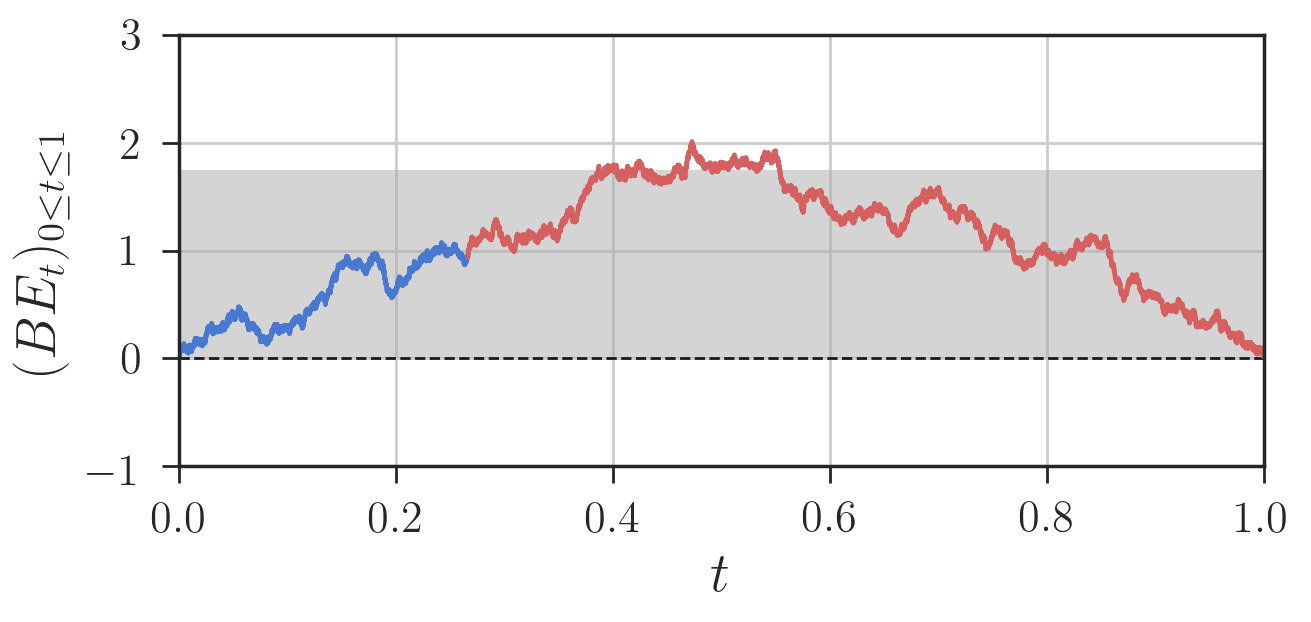}
    \caption{%
       Example realizations of both the Brownian Bridge process (\emph{left}) as well as the Brownian
       Excursion process, here constructed via circular shift of existing
       Brownian Bridge realization (\emph{right}). Coloring indicates the re-arrangement used to 
       construct the Brownian Excursion. For both processes, the 95\% confidence region
       for the process is given in grey. In this case, a Brownian Bridge process well within
       the confidence interval ($\approx [-1.365, 1.365]$) constructs a Brownian Excursion 
       process whose maximum leaves the interval ($\approx[0, 1.745]$).
       \label{fig:brownian-excursion}}
\end{figure}

The procedure we have just described is nothing more than finding the extremal value of a
Brownian Excursion,
\begin{align}
&\underset{0\leq t \leq 1}{\max} BE_t =
\underset{0\leq t \leq 1}{\max}  B_t - \underset{0\leq t \leq 1}{\min}  B_t.
\end{align}
The rejection zone can be computed according to 
\begin{equation}
\mathbb{P}\left[\left\{\underset{0\leq t \leq 1}{\max} BE_t \right\} > \alpha \right] 
  = 2\sum_{i = 1}^{\infty}(4 i^2 \alpha^2 - 1) e^{-2 i^2 \alpha^2}.
\end{equation}

\paragraph{Process Normalization.}
The overall maximum of a Brownian motion, or a Brownian bridge, are natural statistics to 
consider, however
they treat the entire interval $[0,1]$ to be homogeneous. However, it is much more likely to
observe the maximum value at the right extreme for Brownian motion, or the center for the
Brownian bridge. Therefore, even if  there is  a signal in the zone of low variance, we
compare it to the maximum over the entire interval and therefore it might easily be
overlooked. 

This problem can be solved by considering a rejection zone taking into account the
changing variance of the underlying process. In the case of a Brownian motion, such
normalization means that the rejection zone becomes a square root hull
$t\alpha$, and in the
case of a Brownian bridge, the hull shape is defined by $\sqrt{t(1-t)}$. 

\paragraph{Area Tests.}
\begin{wrapfigure}[23]{hr}{0.4\textwidth}
    \vspace{-5ex}
    \scalebox{0.8}{
    \begin{minipage}{0.47\textwidth}
    \begin{algorithm}[H]
    \caption{Single Cumulative Test}
    \begin{algorithmic}
    \State \textbf{Input:} \\
      $X$: Set of Patient Covariates\\
      $T$: Treatment Indicators \\
      $R$: Measured Outcome (endpoint) \\
      $M$: Number of Monte-Carlo Simulations
    
    \\\hrulefill
    \vspace{-1ex}\State \textbf{Preprocessing}
    \\\vspace{-2ex}\hrulefill
    \For{$i \in 1, \dots, N$}
    \State $R_i \leftarrow R_i - \mathbb{E}[R|T=T_i]$
    \State $T_i \leftarrow T_i-\mathbb{E}[T]$ 
    \State $Y_i \leftarrow R_i \cdot T_i$
    \EndFor
    
    \\\hrulefill
    \vspace{-1ex}\State \textbf{$\mathcal{H}_0$ Statistics via MC}
    \\\vspace{-2ex}\hrulefill
    \For{$k \in 1, \dots , M$}
    
    \State $\mathbf{q}$ $\leftarrow$ $\text{RandomPermute([1, 2, \dots, N])}$
    
    \State $C \leftarrow (\sum_{i = 1}^n Y_{q_i})_{1 \leq n \leq N}$
    
    \State $S[k]$ $\leftarrow$ Test statistics on $C$   
    
    \EndFor
    
    \\\hrulefill
    \vspace{-1ex}\State \textbf{Test on True Data}
    \\\vspace{-2ex}\hrulefill
    \State $\mathbf{s} \leftarrow {\rm SortPermutation}(X)$
    
    \State $C \leftarrow (\sum_{i = 1}^n Y_{s_i})_{1 \leq n \leq N}$
    
    \State $V$ $\leftarrow$ Test statistics on $C$
    
    \State \textbf{Output:} \\
      $p\text{-value} \leftarrow
      \frac{1}{M}\sum_{k=1}^M{\mathbf{1}_{S[k]>V}}$
    
    \end{algorithmic}
    \label{alg:single-test}
    \end{algorithm}
    \end{minipage}}
    \end{wrapfigure}
Finally, we propose two more tests based on the total area under the curve (AUC) statistic, the
idea behind the use of the entire area is that even if the maximum value does not reach
the critical threshold, the fact that there are multiple points where the process
\emph{approaches} the limit may be indicative of a presence of a signal.

The test statistic in this case is computed as 
\begin{align}
  A = \sum_{i\in[1,N]}\quad \left\lvert \frac{1}{\sqrt{N \sigma_N^2}}\cdot C_i^j \right\rvert.
\end{align}
We also consider a modification of this test where we sum squared values of the
cumulative process,
\begin{align}
  SA = \sum_{i\in[1,N]}\quad \left ( \frac{1}{\sqrt{N \sigma_N^2}}\cdot C_i^j
  \right )^2.
\end{align}

\paragraph{Implementation.}
Depending on a prior knowledge about the type of the signal, other tests can be
introduced as well.  Since it is not always possible to have a closed form for the
statistic distribution, we use a general numerical framework based on Monte-Carlo
(MC) simulations to compute statistic critical values \cite{Shao:2012aa, Higgins:2003aa}. 
We describe the process for this evaluation in Alg. \ref{alg:single-test}.

\subsection{Combined Test}
\label{sec:combtest}
            
As mentioned in the previous section, one may prefer different tests depending
on the type of the signal one expects to observe in the data. However, it is not
always possible, or desirable, to choose a particular test \emph{a priori}.
Instead, one might run all tests in parallel to see if some signal is
detected by at least a single test. This procedure is a classical
multiple test, and thus one needs to use a correction procedure for calculation of 
the final $p$-value in order to ensure that the Type-I error is not inflated.  
Standard correction procedures, such as Bonferoni correction \cite{Dun1961}, 
may be too severe, obfuscating the detection of significant covariates in practice. 
This is especially true in our case due to high correlation between 
our proposed significance tests, thus
leading to a significant loss in combined test power.  However, as in
\cite{Gates:1991aa}, it is possible to adapt our numeric procedure to run
multiple tests simultaneously without any losses in Type-I or Type-II errors.

\begin{wraptable}{hr}{0.4\textwidth}
    \begin{adjustbox}{max width=0.4\textwidth}
  \begin{tabular}{lp{5.0cm}}
    \toprule
    {\bf Test Name} &  {\bf Description}\\
    \midrule
    Baselines & \\
    \cmidrule(lr){1-2}
    \textit{MoLin}   & Modified outcome\\ & linear regression test\\
    \textit{Max}     & Max of Brownian motion\\ 
    \cmidrule(lr){1-2}
    Proposed & \\
    \cmidrule(lr){1-2}
    \textit{MaxB}    & Max of Brownian bridge\\
    $\text{\it MaxB}_N$  & Max of Brownian bridge, \\ & normalized via $\sqrt{t(1-t)}$\\
    \textit{MaxBE}   &  Max of Brownian excursion\\
    $\text{\it MaxBE}_N$ &  Max of Brownian excursion\\ & normalized via $\sqrt{t(1-t)}$\\
    \textit{AreaB}     &  Area under Brownian bridge\\
    \textit{SAreaB}   &  Squared area under Brownian bridge\\
    \bottomrule
    \end{tabular}
\end{adjustbox}
    \caption{List of statistical tests for the identification of covariate-treatment
    interactions evaluated in this work.
    \label{tab:test-descriptions}}
\end{wraptable}

Specifically, when calculating the combined significance test, 
for each covariate permutation of the centered treatment response, 
we compute the list of metrics defined by each statistical test included in the combined test, 
along with their corresponding $p$-values. Subsequently, we compare the vector of observed 
statistics with samples generated from random permutations of the centered treatment response. 
Then, of all the $p$-values for the individual tests in the combined test,
we use the \emph{minimum} $p$-value as the final aggregated statistic. From this minimum, 
we define the order on the set of vectors with $p$-values and also compute the $p$-value of 
the combined test. We present the pseudo-code for this combined test in Alg.~\ref{alg:combined-test}.

In our experiments, we combine \textit{MaxB}, \textit{MaxB}$_N$, \textit{MaxBE},
\textit{SAreaB}, and \textit{SAreaB} for our final combined significance test, thus
accounting for many possible signal shapes. Selecting this subset, rather than applying \emph{all}
possible tests, reduces the computational burden of running the combined test.
In practice, it is also possible to build a
combined test tailored to \emph{a priori} knowledge about a given dataset, as long as such a 
construction is fixed prior to any analysis of the dataset under investigation.

\needspace{4\baselineskip}
\subsection{Multi-dose trials}
\label{sec:corrcorr}
\begin{wrapfigure}[23]{hr}{0.4\textwidth}
    \vspace{-10ex}
    \scalebox{0.8}{
    \begin{minipage}{0.5\textwidth}
    \begin{algorithm}[H]
\caption{Combined Test}
\begin{algorithmic}

\State \textbf{Input:} \\
  $X$: Set of Patient Covariates\\
  $T$: Treatment Indicators \\
  $R$: Measured Outcome (endpoint) \\
  $M$: Number of Monte-Carlo Simulations\\
  $[F_1, \dots, F_L]$: List of $L$ Tests to Combine

\\\hrulefill
\vspace{-1ex}\State \textbf{Preprocessing}
\\\vspace{-2ex}\hrulefill
\For{$i \in 1, \dots, N$}
\State $R_i \leftarrow R_i - \mathbb{E}[R|T=T_i]$
\State $T_i \leftarrow T_i-\mathbb{E}[T]$ 
\State $Y_i \leftarrow R_i \cdot T_i$
\EndFor

\\\hrulefill
\vspace{-1ex}\State \textbf{$\mathcal{H}_0$ Statistics via MC}
\\\vspace{-2ex}\hrulefill

\For{$m \in 1, \dots , M$}

\State $\mathbf{q}$ $\leftarrow$ $\text{RandomPermute([1, 2, \dots, N])}$

\State $C \leftarrow (\sum_{i = 1}^n Y_{q_i})_{1 \leq n \leq N}$

  \For{$l \in 1, \dots, L$}
    \State $S[l, m]$ $\leftarrow$ $F_l(C)$  
  \EndFor

\EndFor

\\\hrulefill
\vspace{-1ex}\State \textbf{Distribution of Minimum $p$-value}
\\\vspace{-2ex}\hrulefill
\For{$k \in 1, \dots , M$}
  \State $P[k]$ $\leftarrow$ $\underset{l}{\min}\quad \frac{1}{M}\sum_{m=1}^M{\mathbf{1}_{S[l, m]>S[l, k]}}$    

\EndFor

\\\hrulefill
\vspace{-1ex}\State \textbf{Test on True Data}
\\\vspace{-2ex}\hrulefill
\State $\mathbf{s} \leftarrow {\rm SortPermutation}(X)$

\State $C \leftarrow (\sum_{i = 1}^n Y_{s_i})_{\mathbf{1} \leq n \leq N}$

  \State $V \leftarrow \underset{l}{\min}\quad \frac{1}{M}\sum_{m=1}^M{\mathbf{1}_{S[l,m]>F_l(C)}}$

\State \textbf{Output:} \\ 
Combined $p\text{-value} \leftarrow
  \frac{1}{M}\sum_{k=1}^M{\mathbf{1}_{P[k]>V}}$

\end{algorithmic}
  \label{alg:combined-test}
\end{algorithm}
\end{minipage}}
\end{wrapfigure}

Up to this point, we have considered the case of a clinical trial with only two treatment groups 
(e.g. placebo vs. treatment). 
However, in many cases, such as Phase II trials, multiple
treatment doses are tested in parallel. In this section, we demonstrate how our approach may be
generalized to the case of multiple doses trials. The generalization is quite straightforward:
instead of $T\in\{\pm 1 \}$ being a binary variable, we let instead $T\in\mathcal{T}\subset\mathbb{R}$ be a 
real-valued encoding of different doses. The set $\mathcal{T}$ may be a discrete scale encoding only dose order, 
or perhaps a more precise representation of the dosing amount, e.g. log value of the amount of administered treatment.

As in the binary case, the only important condition we need to ensure is that
$\mathbb{E}[T] = 0$. As long as this is true, then we can directly apply the modified outcome transformation 
and use the framework of cumulative processes without any additional modifications. In essence,
in the binary case we focus on the  conditional correlation between $R$ and a centered binary
variable $T$ conditioned on $X$, and in the multi-dose case we do the same thing only with
$T$ being simply centered.      

\section{Synthetic Experiments}
\label{sec:synth-exp}  
In order to evaluate the efficiency of the proposed approach in an objective manner with known
ground-truth, we generate a synthetic dataset of 
clinical trials. Although these models do not reflect the full complexity of such data in
practice, 
we consider synthetic models for two main reasons: 
(i) to validate the ability of the evaluated statistical 
tests to accurately recover covariates which are known to be significantly correlated with the
treatment, and
(ii) to investigate particular cases of linear and non-linear interactions between 
covariates and treatment effects, demonstrating the strengths and weaknesses of each test. 
We list 
all of evaluated statistical tests in Table~\ref{tab:test-descriptions}. 
Additionally, we demonstrate
how to construct synthetic datasets for which each of the proposed test will fail, giving 
insight into the failure cases for each of the proposed statistical tests.
We will also show that the combined test proposed in Sec. \ref{sec:combtest} 
provides near best-case accuracy for most of the tested synthetic models.

\subsection{Synthetic models}      
For our experiments, we construct a wide array of synthetic treatment response
curves whose purpose is to simulate the complex behavior observed in real-world
clinical trials. Below, we describe each of the different synthetic models.

We consider three types of synthetic models, in order of increasing complexity,
({\it i}) a linear model, ({\it ii}) a set of piecewise-constant models, and ({\it iii}) 
a fully non-linear model.

For all these models, we generate covariates as i.i.d. uniform random variables over $[0,1]$. 
Next, given the full set of patient covariates $X$, 
we write the response model in the general form,
\begin{align}
    R[T | X] &= W_{\rm trend}(X) 
    + T\cdot W_{\rm interact}(X) + \epsilon,
\end{align}
where $\epsilon \sim \mathcal{N}(0,\Delta)$ is a noise term with variance $\Delta$, and 
$W_{\rm trend}(\cdot)$ and $W_{\rm interact}(\cdot)$ are functions operating across the 
set of covariates. 

\begin{figure}[t!]
    \centering
    \begin{subfigure}[t]{0.45\textwidth}
        \centering
        \includegraphics[width=\textwidth]{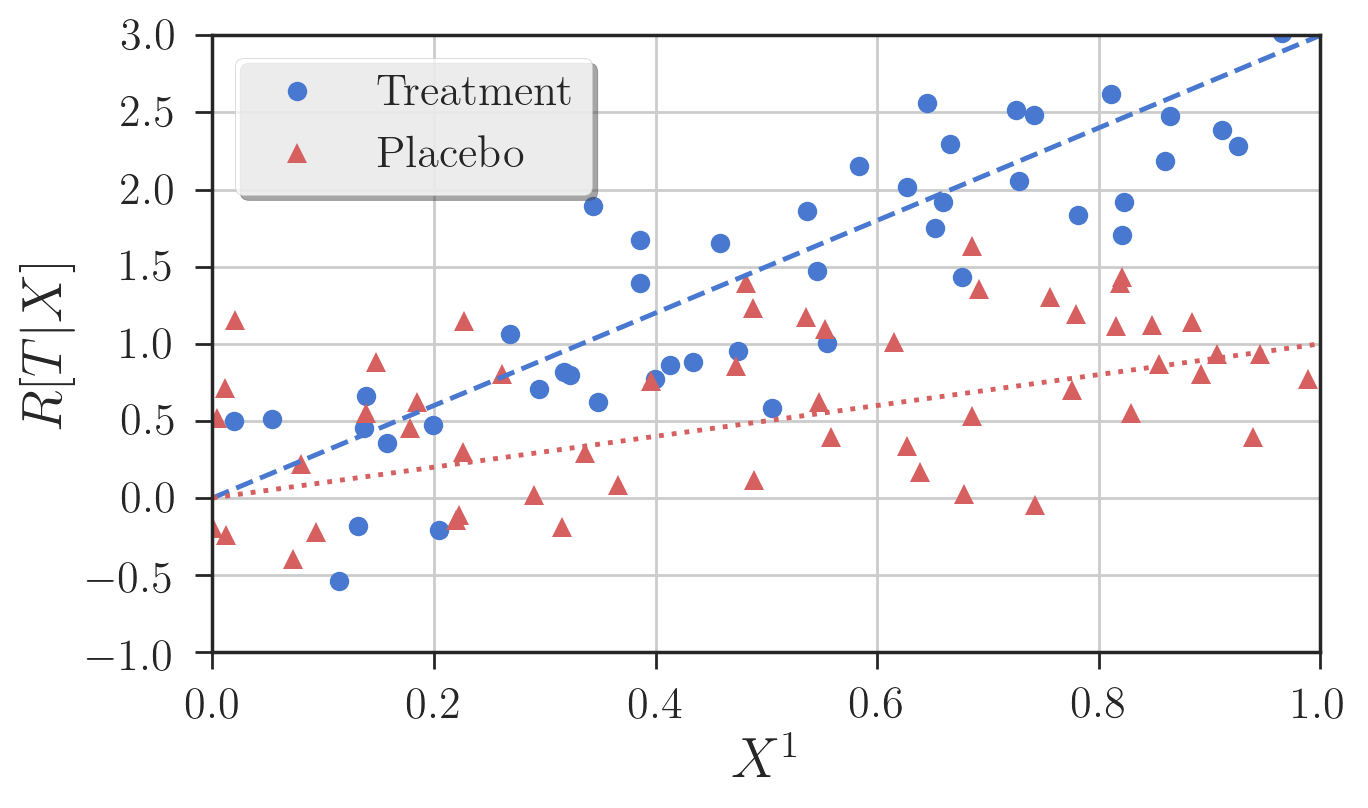}
        \caption{\textbf{L} for $\Delta = 0.25$, $W_1 = 2$, $W_2 = 1$}
    \end{subfigure}
    ~
    \begin{subfigure}[t]{0.45\textwidth}
        \centering
        \includegraphics[width=\textwidth]{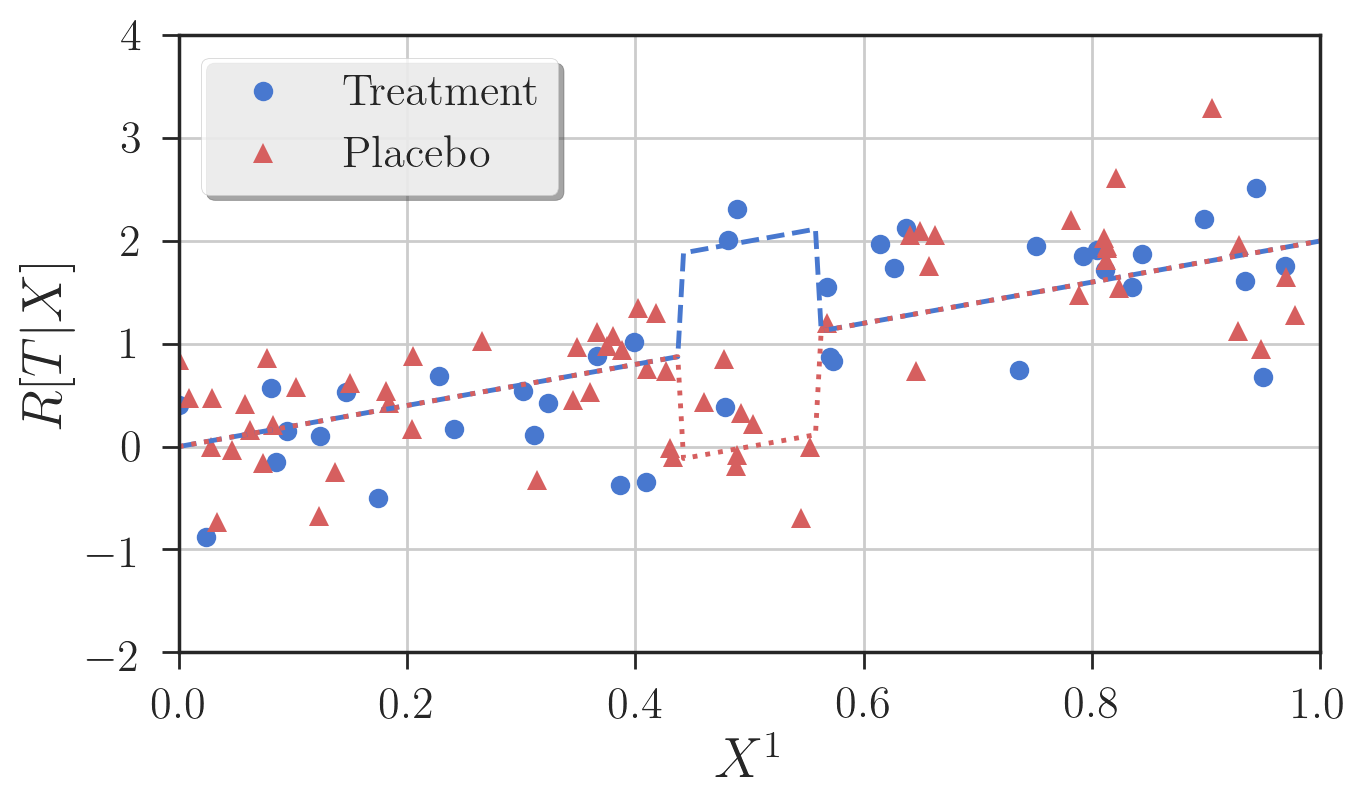}
        \caption{\textbf{PC-Int}$_{2}$ for $\Delta = 0.25$, $W_1 = 2$, $W_2 = 1$}
    \end{subfigure}
    \\
    \begin{subfigure}[t]{0.45\textwidth}
        \centering
        \includegraphics[width=\textwidth]{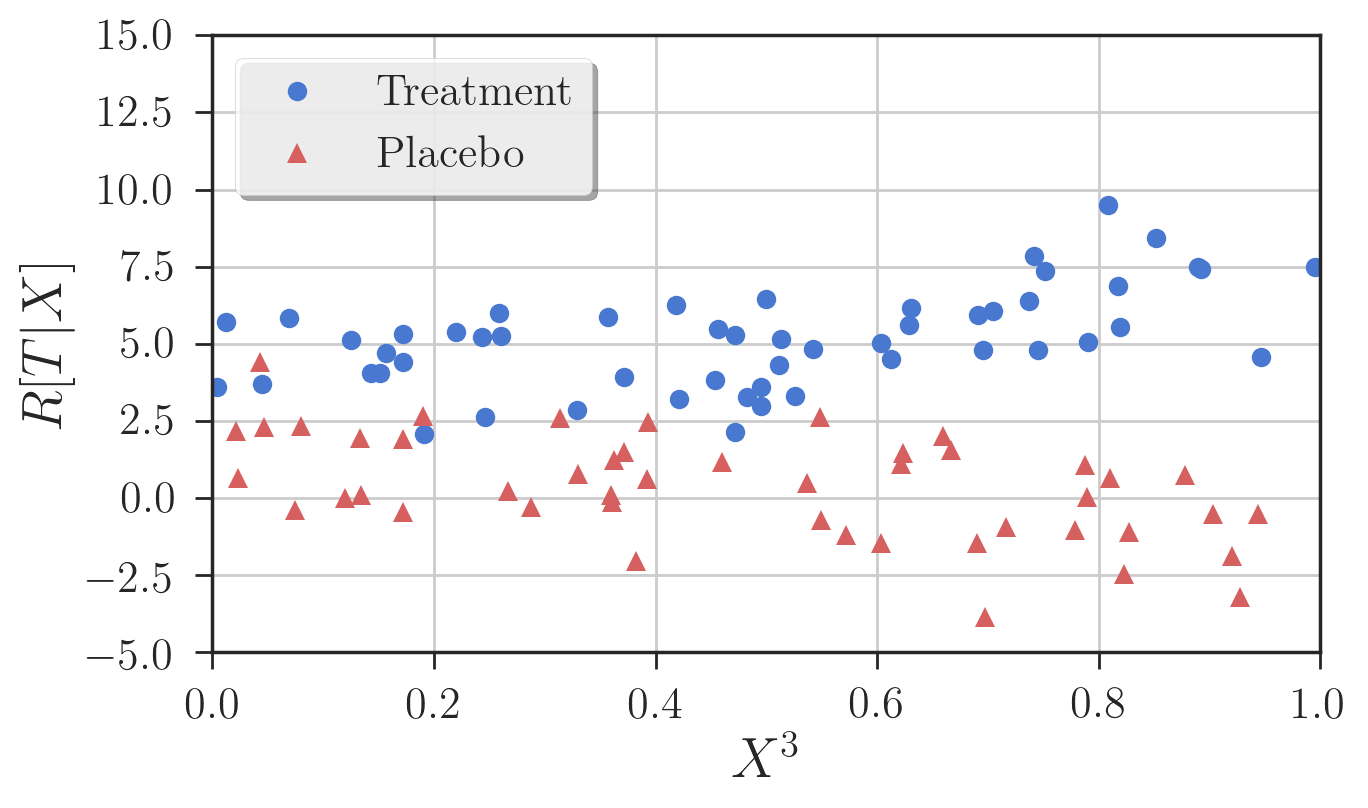}
        \caption{\textbf{NL} for $\Delta = 0.25$}
    \end{subfigure}
    \caption{\label{fig:example-linear-model}%
    Example $R[T|X]$ curves of different synthetic models for $N=100$.}
\end{figure}

\paragraph{Linear Model.}
The linear model (\textbf{L}) corresponds to the case where $W_{\rm trend}$ and 
$W_{\rm interaction}$ are linear functions of 
the of the first covariate, in this case
\begin{align}
    W_{\rm trend}(X) &= W_1 X^1, \\
    W_{\rm interact}(X) &= W_2X^1.    
\end{align}
A successful test for \textbf{L} would report $X^1$ as
a statistically significant covariate in terms of treatment interaction, and all other covariates
would be rejected.

\paragraph{Piecewise-constant Models.}
These proposed piecewise-constant ($\textbf{PC}$) synthetic models contain 
either one or two discontinuous jumps, denoted as \textit{thresholding} and \textit{interval} 
effects, respectively. Such discontinuities 
are not captured in linear models, however such effects are ubiquitous 
throughout biology and medicine.

We introduce four versions of $\textbf{PC}$ models to capture these different configurations. 
All choices of $W_{\rm interaction}(X)$ are simple indicator functions with different supports to
capture both thresholding and interval discontinuities. We present two thresholding models 
containing a single jump discontinuity, \textbf{PC-Th}$_1$ and \textbf{PC-Th}$_2$, as well as
two interval models containing two jump discontinuities, \textbf{PC-Int}$_1$ and
\textbf{PC-Int}$_2$, 
\begin{align}
    W_{\rm interact}(X) &= W_2 \mathbf{1}_{[\sfrac{1}{2}, 1]}(X^1), \tag{{\bf PC-Th}$_1$}\\
    W_{\rm interact}(X) &= W_2 \mathbf{1}_{[0, \sfrac{1}{8}]}(X^1), \tag{{\bf PC-Th}$_2$}\\
    W_{\rm interact}(X) &= W_2 \mathbf{1}_{[\sfrac{1}{4}, \sfrac{3}{4}]}(X^1), \tag{{\bf PC-Int}$_1$}\\
    W_{\rm interact}(X) &= W_2 \mathbf{1}_{[\sfrac{7}{16}, \sfrac{9}{16}]}(X^1). \tag{{\bf PC-Int}$_2$}
\end{align}
Finally, for each of the four different {\bf PC} models, we use a linear baseline trend of 
$W_{\rm trend}(X) = W_1 X^1$.

\paragraph{Non-linear model.} 
For the non-linear model (\textbf{NL}), we utilize the fourth synthetic
scenario reported in Sec. 4 of \cite{ZZR2012}. Specifically, the covariate-treatment
signal is generated according to
\begin{align}
    W_{\rm trend}(X) &= 1+ 2 X^1 + X^2 +0.5 X^3,\\
    W_{\rm interact}(X) &= 1
    - \left(X^1\right)^3  
    + \exp\{\left(X^3\right)^2 + X^5\}
    + 0.6X^6
    - \left(X^7 + X^8\right)^2.
\end{align}
We note that covariate $X^4$ is already a decoy covariate in this definition. Additionally, 
covariates $X^7$ and $X^8$ are symmetric.
        
\paragraph{Experimental Parameters.}
\label{sec:exp-param}
For our experiments, we evaluated the proposed statistical tests on the given synthetic models
over a range of easy and difficult settings. Overall,the difficulty of the 
task can be controlled by modifying $\Delta$, the ratio between baseline and treatment 
effect ($\sfrac{W_1}{W_2}$), and the number of significant covariates versus total 
number of decoy covariates. Specifically, we construct our experiments along these three axes in
the following manner.
\begin{itemize}
\item \textit{Noise:} The variance of the noise term is evaluated over the range 
                      $\sqrt{\Delta} \in [1, 8]$.
\item \textit{W1:} The ratio between the scale of the trend term (\emph{baseline coefficient}) 
                     and the interaction term. Here, we fix $W_2=1$ while varying $W_1$ over
                     $[1,5]$. 
\item \textit{Decoy:} Given the fixed number of significant covariates, 1 for \textbf{L} and 
                      \textbf{PC} and 7 for \textbf{NL}, $D$ decoy covariates drawn i.i.d. randomly are 
                      added to the dataset. We vary $D$ over $[1,100]$.
\end{itemize}
      
\paragraph{Performance metrics.}
Each of the evaluated significance tests are compared based on their 
statistical power when their Type-I errors
are fixed. Specifically, after generating synthetic data several times and performing
all with a $p$-value 0.05 significance threshold,  we compare the
significance tests by their ability to detect the known significant covariates in 
each synthetic model.
 
Since the test power is a function of many parameters describing various aspects
of signal-to-noise ratio, in our experiments we also trace test sensitivity as a
function of the experimental axes described in the previous section. We then
compare the resulting curves to estimate which test provides the best
performance over the tested range. 
Fig.~\ref{fig:indiv-param} shows an example of such curves for
different significance tests performed on the \textbf{PC-Int}$_1$ model under
varying noise and baseline strength.
As an aggregate measure for overall test performance,
 we compute the area under each curve. In this particular example,
\textit{MaxBE} is the best performing significance test, as it is 
capable of detecting the signal more often than alternative approaches. 
\begin{figure}[t]
    \centering
    \includegraphics[width=0.49\textwidth]{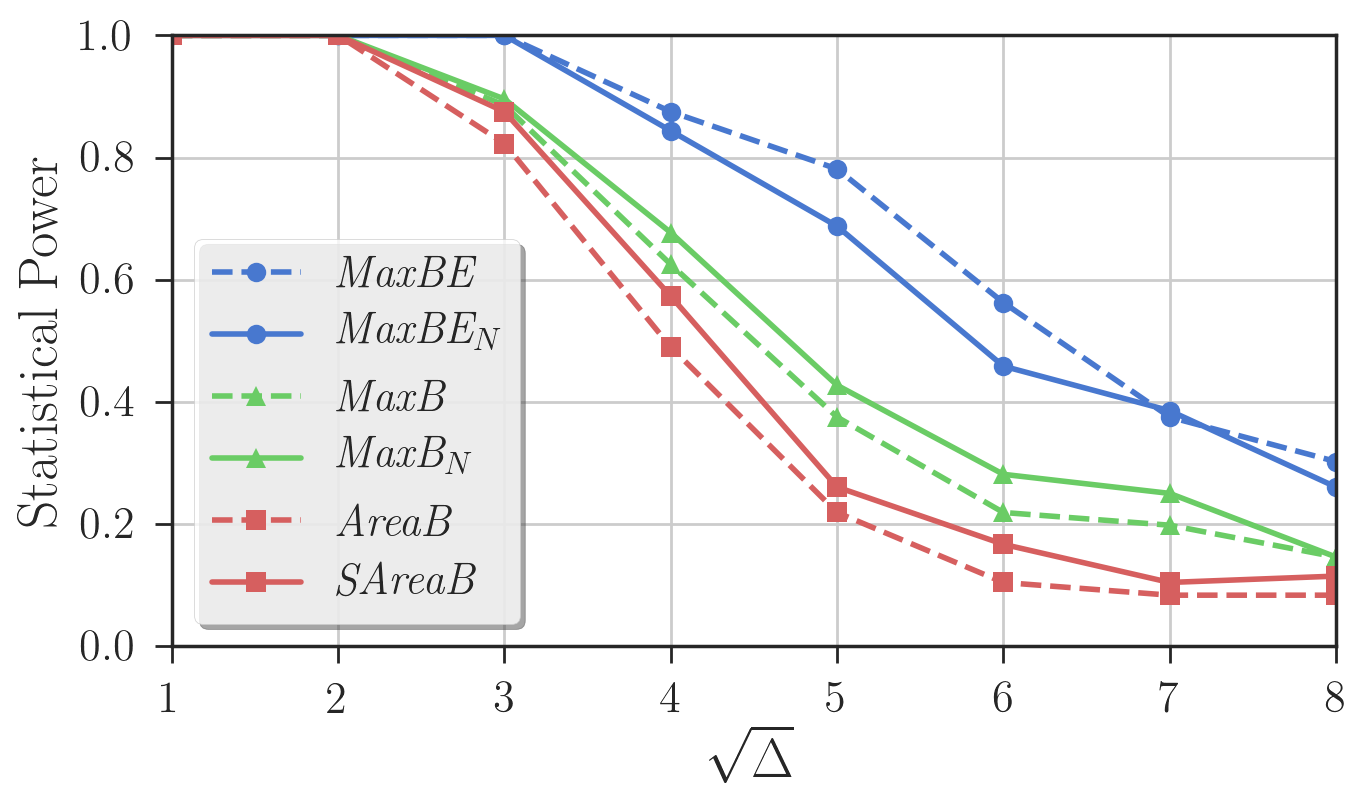}
    \includegraphics[width=0.49\textwidth]{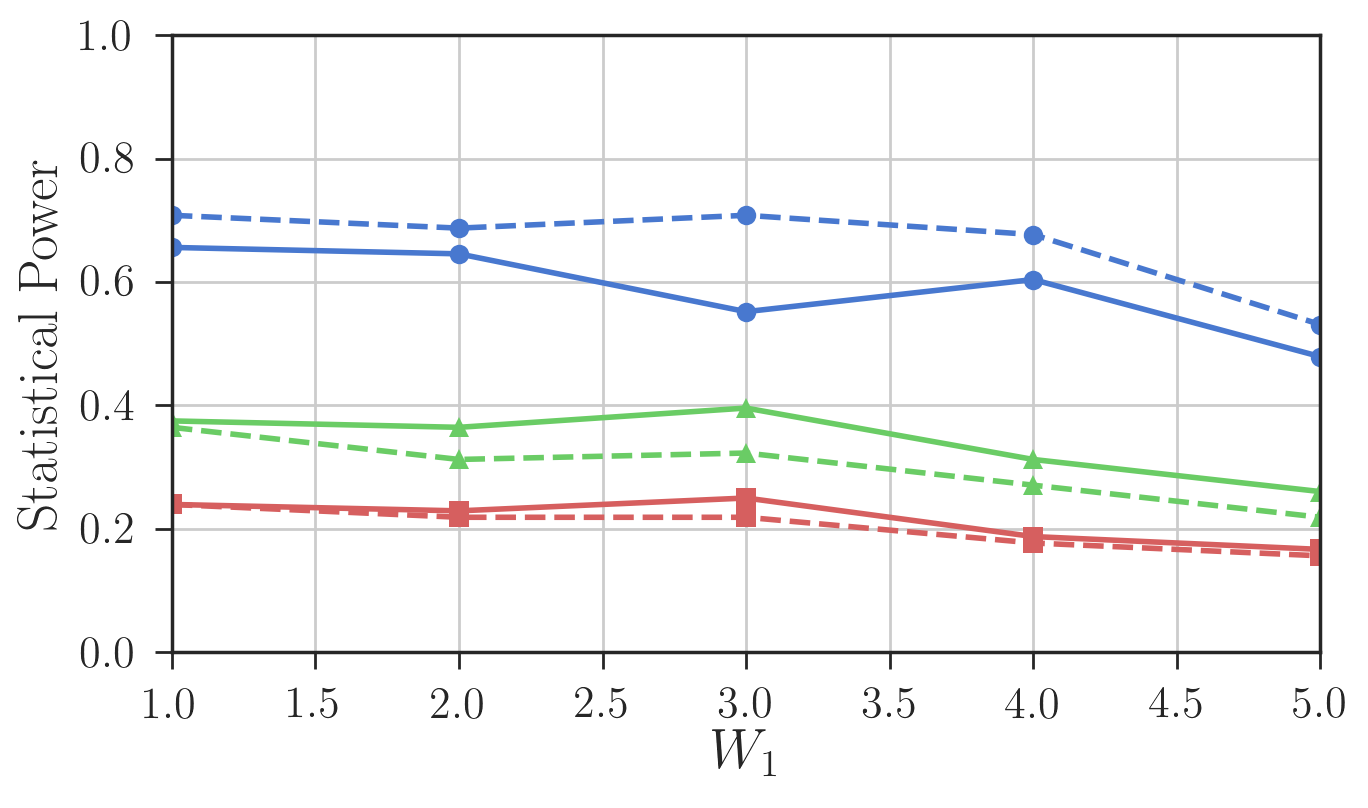} 
    \caption{%
        Comparison of individual tests for model {\bf PC-Int}$_1$ under varying
        experimental parameters. \textit{Left} Comparison over noise
        standard deviation, $\sqrt{\Delta}$. \textit{Right:} Comparison over the value of
        the model baseline coefficient $W_1$. 
        The legend remains consistent over both charts.
        \label{fig:indiv-param}}
\end{figure}

\subsection{Results}    	        	
\paragraph{Centering effect.}
Fig.~\ref{fig:BriMot} shows the improvement made by removing the average
values on each group of patients, comparing only the \emph{Max} tests 
performed on either a Brownian-Motion-like process (no centering) 
or on a Brownian-Bridge-like process (centering). For visual convenience, 
the scores are normalized by the maximum value  per dataset (column) such that 
there is alway a test with the performance score equal to 1 in each simulation scenario. 
As expected from our theoretical analysis, we observe that the test with centering correction 
(\emph{MaxB}), in general, provides better significant covariate detection than the test with
no centering correction (\emph{Max}).

\begin{figure}[t]
    \centering
    \includegraphics[width=0.7\textwidth,trim={0, 3.0cm, 0, 2.7cm},clip]{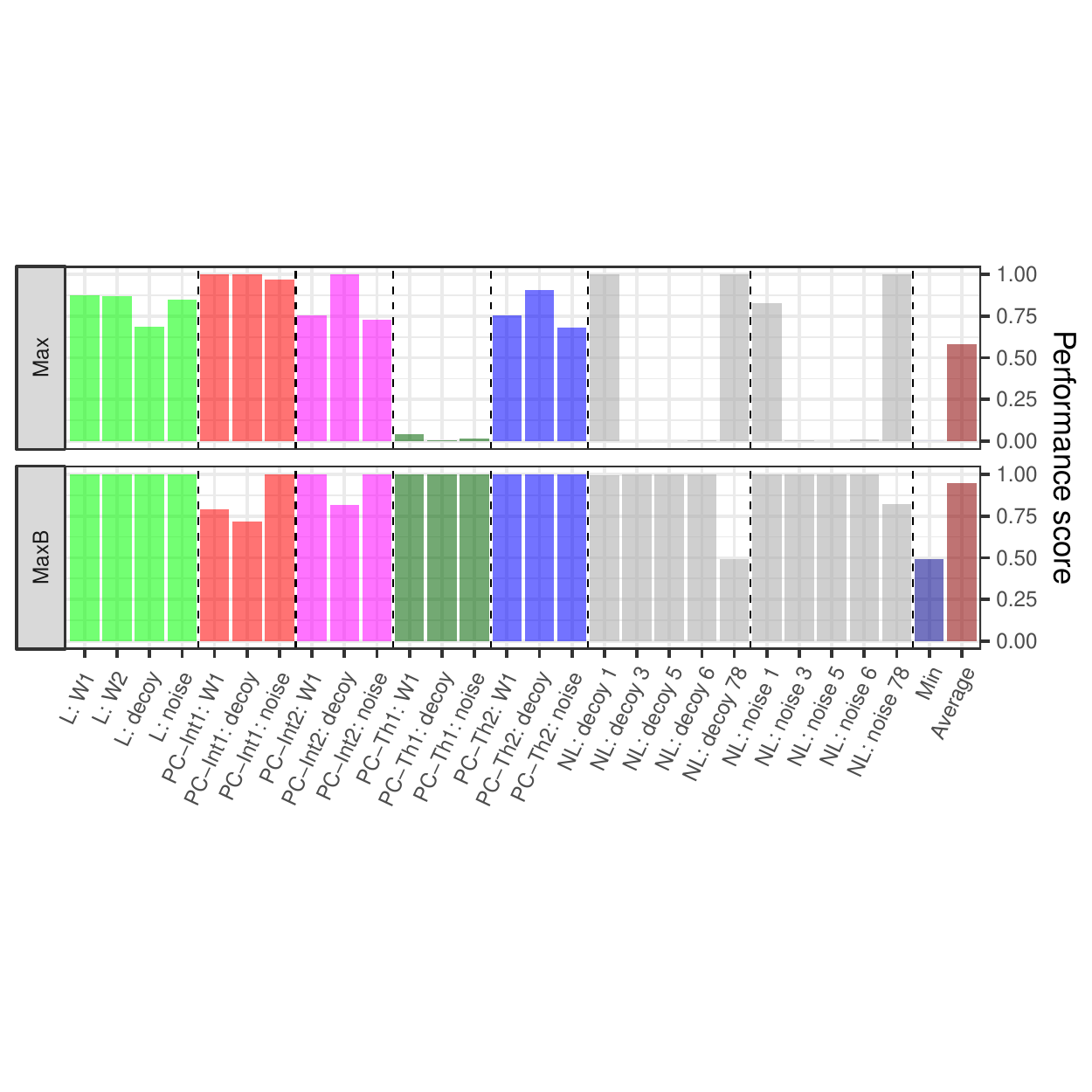}
    \caption{%
        Detailed performance comparison between Brownian Motion (un-centered
        response) \textbf{Max} and Brownian Bridge (centered response)
        \textbf{MaxB} tests over the evaluated models and experimental
        parameters.  }
     \label{fig:BriMot}
\end{figure}

\paragraph{Main result summary.}    	  
A global summary of the main results is presented in Fig.~\ref{fig:hist} for
the cumulative process tests \textit{MaxB}, $\textit{MaxB}_N$, \textit{MaxBE},
$\textit{MaxBE}_N$, \textit{AreaB}, \textit{SAreaB}, the baseline linear
correlation test with modified outcome \textit{MoLin},
and also for the combined test \textit{Comb}.  On this chart, each row
corresponds to the designated test, 
while each column corresponds to a particular synthetic model and the varied parameter
(\emph{noise}, \emph{W1}, or \emph{decoy}), as described in Sec.~\ref{sec:exp-param}.

The first set of columns corresponds to model \textbf{L}.
We report that most of the tests have the same statistical power for the detection of
the significant covariate $X^1$ for this model.
As expected, the linear test \textit{MoLin} performs very well,
similarly to tests which are sensitive to a global measurements on the cumulative
process, such as \emph{AreaB} and \emph{SAreaB}, since averaging over the whole curve provides
more robustness. The \textit{MaxBE} test does not perform well on linear data, as it is
designed to capture transient correlations. Because of this feature, for \textbf{L},
it mostly detects unrelated fluctuations. 
The combined test \emph{Comb} is also shown to perform very well in the linear setting.

The next four sets of columns correspond to the piecewise constant models (\textbf{PC}) and 
their associated experimental axes. 
For this class of models, the tests under study perform very differently, 
revealing the strengths of each individual test, which we now describe.

\textit{i) Threshold detection.}
As expected, \emph{MaxB} performs the best for detection of threshold effects, at 
least when threshold is centrally located (\textbf{PC-Th}$_{1}$). Actually, this task appears relatively 
easy since most tests report good power, even \emph{MoLin}. However, when the
threshold is closer to the boundaries, and therefore more difficult to detect as in \textbf{PC-Th}$_{2}$, 
the \emph{MaxB} test fails to recognize the significant covariate, as do most of the other tests.
However, \emph{MaxBN} performs very well in this setting due to the
normalization hull which brings more power to the extreme sides of. 

\textit{ii) Interval detection.}
This task appears to be very difficult for most tests, which fail
catastrophically on both \textbf{PC-Int}$_{1}$ and \textbf{PC-Int}$_{2}$. 
The \emph{MaxB} and \emph{$MaxB_N$} tests can only detect a
single jump discontinuity, while \emph{AreaB}, \emph{SAreaB}, and \emph{MoLin} 
are global measures which are unable to capture such a localized phenomenon.  
The only two tests which give strong positive results on this task is the
\emph{MaxBE} test and its normalized version \emph{$MaxBE_N$}, 
since large excursions of the cumulative process correspond to a transient effect on the 
covariates. 

We also see that \emph{Comb} demonstrates very robust behavior across all the tested models
and experimental parameters.
While it does not always provide the best possible performance,
its strength lies in its ability to operate in widely varied settings, from
global linear trends to narrow threshold or interval effects.

Finally, to see how those tests would perform on less artificial models, the next set of columns 
correspond to experiments performed on the \textbf{NL} model. Here, we report the detection 
performance for each of the significant variables in \textbf{NL}.
Once again, we observe that particular experimental configurations
of this model can lead to individual test failures. However, once again we see that
\emph{Comb} demonstrates very robust performance across all tested experimental parameters.

These synthetic experiments have illustrated the ability of each individual test to detect
known significant covariates under 
specific assumptions. When those assumptions are known \emph{a priori} by the investigator (e.g. a known
threshold effect), then it may be wiser to select the individual test according to this knowledge. 
However, when this knowledge is not available or too uncertain, our analysis suggests that 
\emph{Comb} could be a robust solution. We present our full set of results both in terms of normalized
AUC as well as raw statistical power in Appendix \ref{apdx:full-results}.

\begin{figure}[H]
    \centering
    \includegraphics[width=0.7\textwidth]{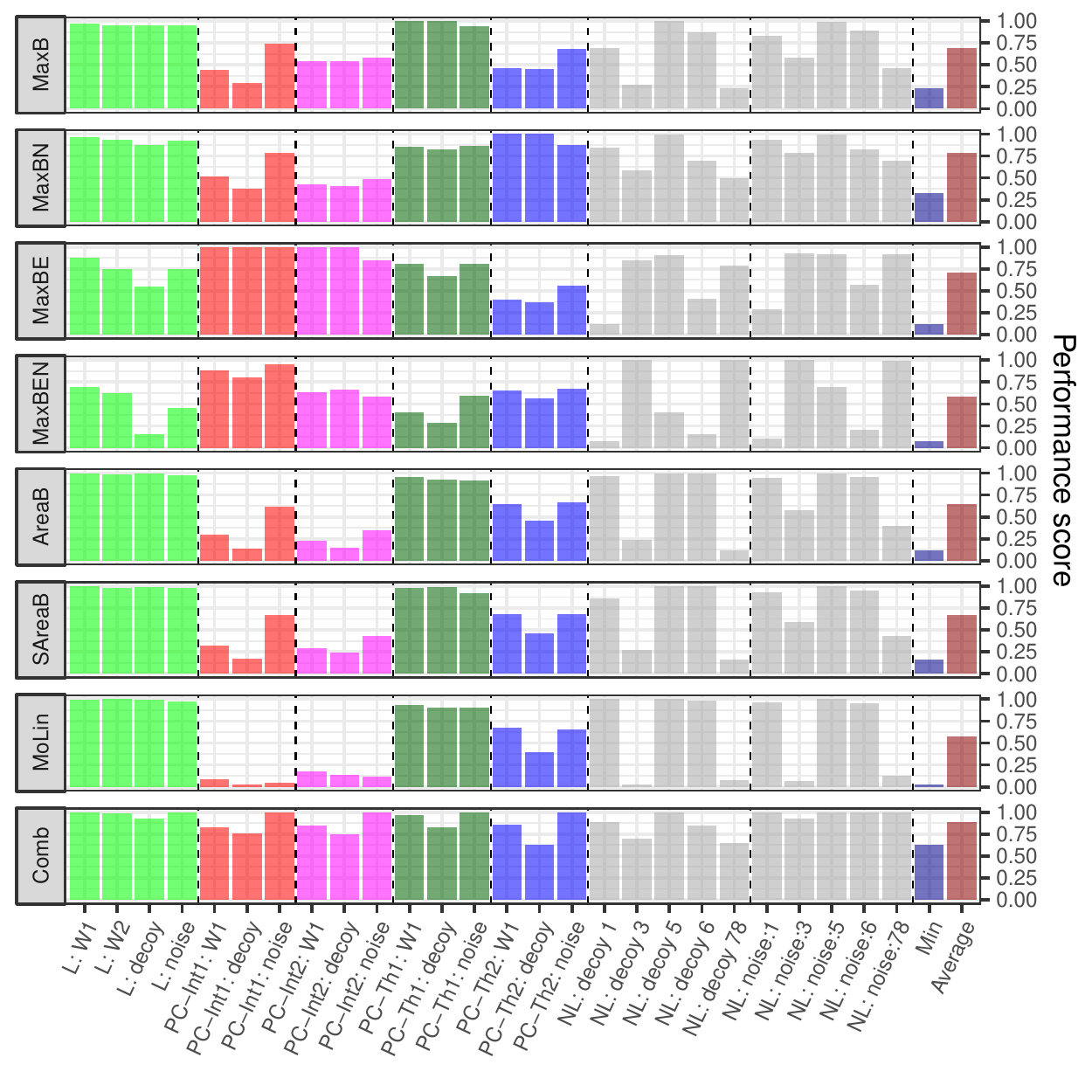} 
    \caption{%
        Visual representation of results obtained from all proposed statistical tests, along with
        the baseline \emph{MoLin}, when performed over the synthetic treatment-interaction models
        (\textbf{L}, \textbf{PC}, and \textbf{NL})
        described in Sec.~\ref{sec:synth-exp}. 
        Evaluated significance tests are given as rows while synthetic models and the experimentation
        parameter are given as columns. Colors are used to indicate different synthetic models.
        Presented performance scores are given as the
        normalized area under the curve over the tested experimental parameter 
        (\emph{noise}, $W_1$, \emph{decoy}). Normalization is performed over the significance tests
        such that the best-performing test reports a performance score of 1.
        The final two columns represent the aggregate minimum and average
        performance of each significance test over the set of tested models and experimentation 
        parameters. Each significance is shown to perform well on some models but worse on others. 
        Notably, the combined test (\emph{Comb}, last row), shows robust performance across all
        experiments.
        \label{fig:hist}}
\end{figure}

\section{Exploration on Real Trials}
\label{sec:realexp}
We now turn our attention to the utility of our new individual tests and their combination
in a more realistic setting.  In this section, we
describe results of application of our method to several real world clinical
trial datasets. For these datasets, there are no known ground truths for significant
covariates. However, as we control Type-I error, we can deduce that detecting more
significant covariates is a desirable property for a successful statistical test on 
these real datasets. 

\subsection{CALGB 40603 (NCT00861705)}
The primary objective of this study was to investigate whether adding
bevacizumab to paclitaxel (+/- carboplatin) and subsequent dose-dense
doxorubicin and cyclophosphamide (ddAC) significantly raises the rate of
pathologic complete response (pCR) in the breast of patients with
HR-poor/ HER2(-), resectable breast cancer \cite{Abramson:2014aa}. The patient
level data of this study can be retrieved from 
Project DataSphere%
\footnote{\url{https://www.projectdatasphere.org/projectdatasphere/html/content/162}} \cite{Green:2015aa}.
The dataset contains the information on
443 patients randomly assigned to four different arms,
\begin{enumerate}
\item paclitaxel $\rightarrow$ ddAC, 
\item paclitaxel + bevacizumab $\rightarrow$ ddAC + bevacizumab, 
\item paclitaxel +  carboplatin $\rightarrow$ ddAC,
\item paclitaxel + carboplatin + bevacizumab $\rightarrow$ ddAC + bevacizumab. 
\end{enumerate}
The dataset contains 45 covariates that can be used to explain the treatment effect
when comparing the 4 arms. 

Table \ref{tab_res_compiled} provides the number of significant variables (0.05
$p$-value significance threshold after Bonferoni correction) detected by the various
tests when comparing Arm 1 versus 2, 1 versus 3, etc. For example, there two
significant interactions when we compare Arms 2 and 4,  the \textbf{clinical N-stage}
(characterization of  the regional lymph node involvement) and \textbf{clinical T-stage}
(characterization of  the size and extent of the tumor). We see that these
covariates are detected only by the combined test,  meaning that the application of
individual test would not have been sufficient for the detection of this
interaction. A similar phenomenon is observed when we analyze the results of the
comparison of Arms 3 and 4. Two significant descriptors are detected: 
\textbf{clinical N-stage} and \textbf{number of sentinel nodes examined}.  
Again, only the combined test
was capable of detecting both interactions simultaneously. Fig.~\ref{fig_calgb}
shows examples of the cumulative curves corresponding to significant variables for
the comparison of Arms 3 and 4.

\begin{figure}[H]
    \centering
    \includegraphics[width=0.49\textwidth]{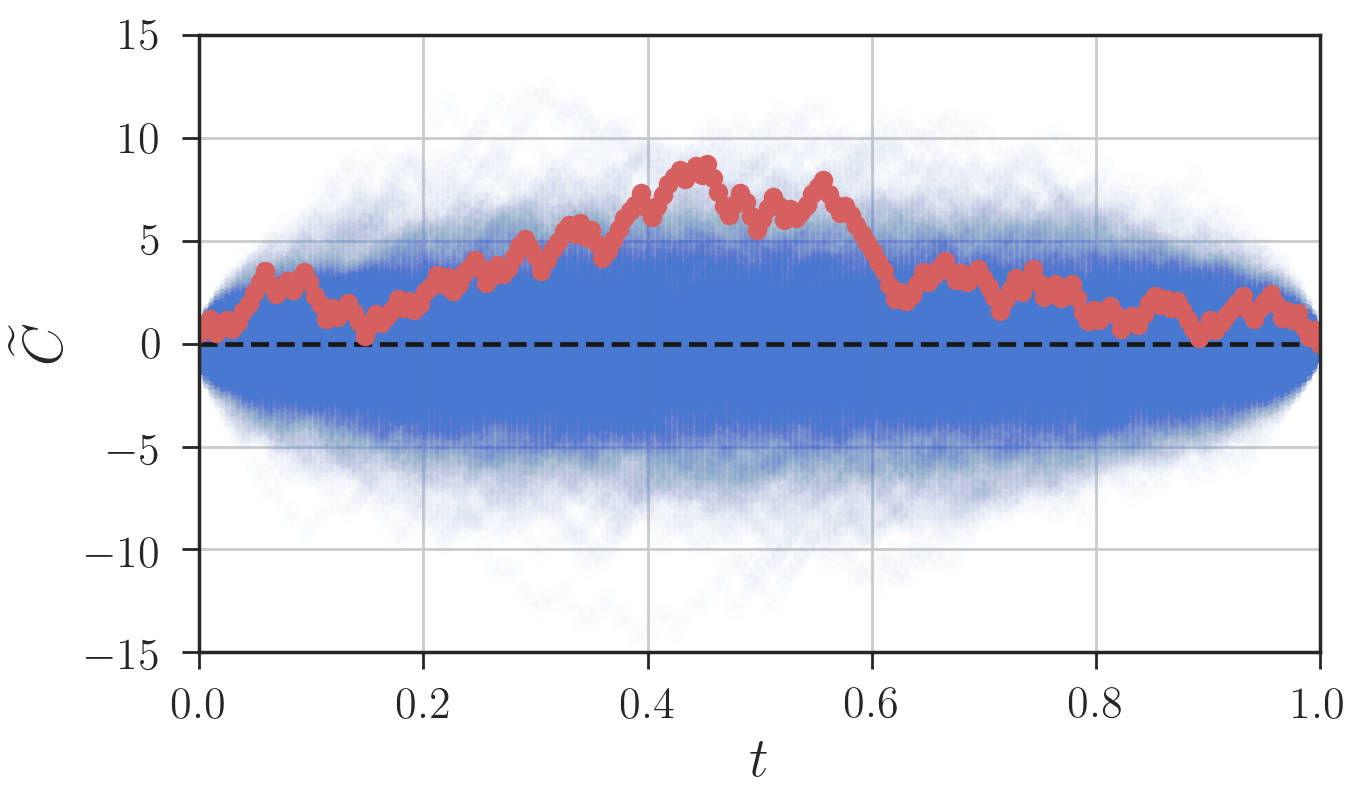}
    \includegraphics[width=0.49\textwidth]{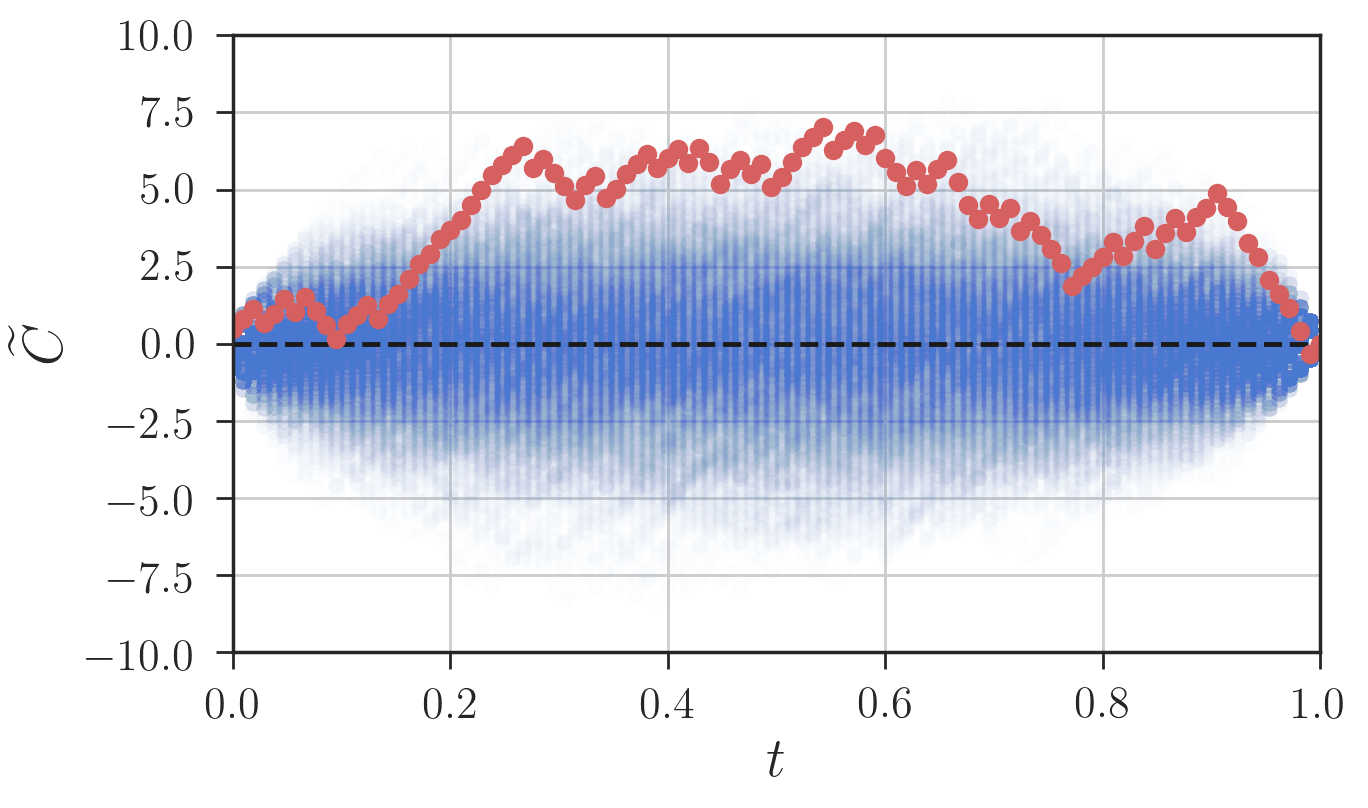}
      \caption{CALGB 40603 dataset,  Arm 4 versus Arm 3. 
      Cumulative processes generated 
      from true covariate data (\emph{red}) is contrasted with those
      constructed from random patient permutations (\emph{blue}).
      \emph{Left:} Evaluation of the \textbf{clinical N-stage} covariate.
      \emph{Right:} Evaluation of the \textbf{number of sentinal nodes examined} covariate.
      \label{fig_calgb}}
    \end{figure}

In both cases, the impact is negative, meaning that smaller covariate values
correspond to larger differences between treatment and placebo (or reference
treatment) arms, therefore implying that patients with smaller values of
these covariates are more likely to benefit from the administration of 
bevacizumab. Since each of the detected covariates characterize the complexity of
the tumor, the discovered dependencies suggest that patients at early stages of the
disease are more likely to benefit from the addition of bevacizumab.

\subsection{BCRP}
The objective of this study was to compare the impact of nutritional and
educational interventions on the psychological and physical adjustment after
treatment for early-stage breast cancer \cite{Scheier:2007aa}.  The dataset was
retrieved from the Quint package \cite{Dusseldorp:2016aa}. In total, there are
252 patients split into three arms: Arm 1: nutrition intervention ($N=85$), Arm 2:
educational intervention ($N=83$), Arm 3: standard care ($N=84$). Similar to the
CALGB results, as shown in Table~\ref{tab_res_compiled}, 
the combined test was the most sensitive test (0.05 $p$-value 
threshold after Bonferoni correction), indicating that the
\textbf{nationality} covariate is a significant factor in determining 
the physical functioning score (SF36). 

\subsection{ACOSOG Z6051 (NCT00726622)}
This is a randomized Phase-III trial evaluating the safety and efficacy of
laparoscopic resection for rectal cancer \cite{Fleshman:2015aa}.  The 462
patients were split into two groups: Arm 1 (open rectal resection) and Arm 2
(laparoscopic rectal resection). Among all tests, as shown in
Table~\ref{tab_res_compiled}, the combined test and extermal of the 
normalized Brownian Bridge test were capable of detecting one covariate significantly correlated
with the efficacy of the treatment regime (0.05 $p$-value significance threshold
after Bonferoni correction): \textbf{distance to nearest radial margin}. The
cumulative process corresponding to this variable is shown in
Fig.~\ref{fig_acosog}, where patients with lower values of the distance to nearest
radial margin covariate are shown to be more likely to benefit more from the
laparoscopic rectal resection. The \textbf{distance to nearest radial margin}
variable is not a parameter that can be assessed prior to the intervention, it is
a characteristic  that can be measured once the operation is complete. Therefore, it
can not be used to guide the selection of the intervention type. However, such 
post-treatment dependencies can provide useful insights into the circumstances
where a treatment of interest may prove most efficient.

\begin{figure}[H]
    \centering
    \includegraphics[width=0.49\textwidth]{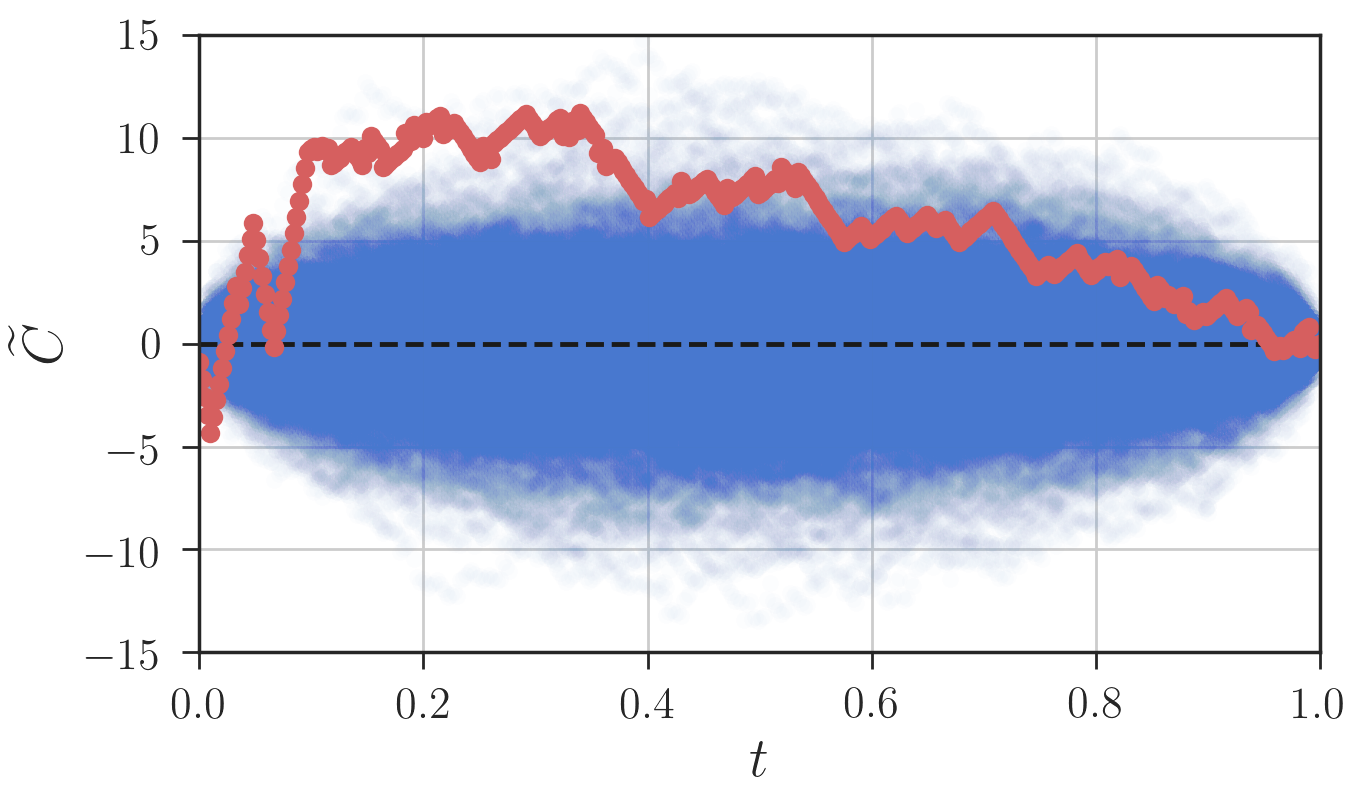}
  \caption{ACOSOG Z6051 dataset evaluated for the 
  \textbf{distance to nearest radial margin} covariate.
  Cumulative processes generated 
  from true covariate data (\emph{red}) is contrasted with those
  constructed from random patient permutations (\emph{blue}).
  \label{fig_acosog}}
\end{figure}

\begin{table}[ht]
    \centering
    \begin{adjustbox}{max width=\textwidth}
    \begin{tabular}{r*{16}{c}}
    \toprule
  & \multicolumn{5}{c}{\bf CALGB}& &\multicolumn{2}{c}{\bf BCRP (C)}& &\multicolumn{2}{c}{\bf BCRP (P)}& &\multicolumn{1}{c}{\bf ACOSOG}&\\
\cmidrule{2-6}                
\cmidrule{8-9}
\cmidrule{11-12}
\cmidrule{14-14}
{\bf Test} &{1v2}&{1v3}&{1v4}&{2v4}&{3v4}& &{1v3}&{2v3}& &{1v3}&{2v3}& &{1v2}&{\bf Total} \\
 \midrule
  \textit{MoLin}    & 0 & 0 & 0 & 0 & 0 & & 0 & 0 & & 0 & 0 & & 0 & 0 \\ 
  \textit{Max}      & 0 & 0 & 0 & 0 & 0 & & 0 & 0 & & 0 & 0 & & 0 & 1 \\ 
  $\textit{MaxB}_N$ & {\bf 1} & 0 & 1 & 0 & 1 & & 0 & 0 & & 0 & 0 & & {\bf 1} & 4 \\ 
  \textit{MaxBE}    & 0 & 0 & 0 & 0 & 0 & & 0 & 0 & & 0 & 0 & & 0 & 0 \\ 
  \textit{AreaB}    & 0 & 0 & 0 & 0 & 1 & & 0 & 0 & & 0 & 0 & & 0 & 1 \\ 
  \textit{SAreaB}   & 0 & 0 & 0 & 0 & 1 & & 0 & 0 & & 0 & 0 & & 0 & 1 \\ 
  \textit{Comb}     & {\bf 1} & 0 & {\bf 2} & {\bf 2} & {\bf 2} & & 0 & 0 & & 0 & {\bf 1} & & {\bf 1} & {\bf 9} \\ 
   \bottomrule
\end{tabular}
\end{adjustbox}
\caption{Number of significant features detected by various statistical tests in
  CALGB 40603, BCRP and ACOSOG Z6051 studies.  BCRP (C) and (P) correspond to
  two different endpoints defined in the study: CESD score and SF36 scores.
  Each row corresponds to a particular test. See Table
  \ref{tab:test-descriptions} for test name abbreviations. 
  Columns names  ``AvB''
  describe comparisons between different arms of the study; 
  see text for a more detail descriptions of study arms. The last column ({\bf Total})
  reports the total number of significant variables detected by each test across all
  evaluated studies.} 
\label{tab_res_compiled}
\end{table}

\section{Discussion}
\label{sec:discussion}
In this paper, we presented a series of new statistical tests tailored for the
detection of complex, non-linear treatment-covariate dependencies in clinical
trial datasets.  We describe how to merge these tests into a single combined test
capable of efficiently detecting different types of interactions. We illustrated the
performance of the proposed approach on various synthetic models, where we
compare it to existing approaches. We also describe an application of the
proposed procedure to three real world clinical trial datasets where we observed
that our proposed tests do indeed allow one to detect signals which can go undiscovered 
using existing approaches.

The proposed technique is a univariate procedure tailored for the detection of
single biomarkers, and therefore is relatively easy to interpret with respect to
multivariate approaches. However, this characteristics is also a significant
limitation of the procedure since, as with any univariate approach, one needs to
apply multiple testing correction (e.g. Bonferoni) to account for the
number of  covariates analyzed. In some cases, this correction might be too
drastic. This issue can be partially resolved by imposing a pre-order on the list
of covariates defined by an expert or by automatically extracting an ordering
from the literature or an external dataset. Another consequence of the nature of 
univariate tests is that 
such a test can miss complex dependencies involving multiple covariates if each of them
separately do not present statistically significant signals.

In our future research, we plan generalize our proposed
cumulative tests to a multivariate setting, and explore the possibility of building a
statistical test capable of detecting non-linear multivariate dependencies with a
proper control over Type-I error. Another interesting direction is the
incorporation of external data sources as prior knowledge. By continuing to
analyze publicly available clinical trial datasets, we hope to 
find new and interesting covariate dependencies, as well as to construct
a large benchmark of datasets for the comparison of covariate-detection methods.   

In this article, we focused on the detection of covariates interacting with the
treatment efficacy. Such covariates provide a natural basis for sub-group
selection. The extension of the proposed statistical tests for sub-group
identification is another important direction that we hope to address in
our future work.        

\bibliographystyle{ieeetran}
\bibliography{references}

\clearpage
\renewcommand{\thesection}{\Alph{section}}
\setcounter{table}{0}
\setcounter{figure}{0}
\renewcommand{\thetable}{\Alph{section}\arabic{table}}
\renewcommand{\thefigure}{\Alph{section}\arabic{figure}}
\appendix
\section{Outcome Centering Proofs}
\subsection{Proof of Lemma \ref{lemma:var-reduce}}
\label{apdx:proof-var-reduce}
As stated in the lemma, we assume that a ``general'' modified outcome
written in the form $\widehat{Y} = T\cdot(R + f(T))$, where
$f(T)$ is an arbitrary function of $T$. Looking at the variance of 
this general modified outcome we have
\begin{equation}
\mathbf{var}[\widehat{Y}]  = 
  \mathbf{var}\left[T\cdot(R + f(T))\right].
\end{equation}
Using the law of total variation, we next condition on the treatment, $T$,
\begin{align}
\mathbf{var}[\widehat{Y}]  = 
&\mathbb{E}\left[\mathbf{var}_R\left[T\cdot(R + f(T))| T\right] \right] \notag \\
&+\mathbf{var}\left[ \mathbb{E}_R\left[ T\cdot(R + f(T)) | T\right]\right].
\end{align}
The conditional variance can be written as
\begin{align}
\mathbf{var}_R\left[T\cdot(R + f(T))| T\right]  &=
T^2 \cdot\mathbf{var}\left[R | T\right],
\end{align}
which follows from the application of the translational invariance property as well as 
the scaling identity. Subsequently, for the conditional expectation, we find
\begin{align}
\mathbb{E}_R&\left[ T\cdot(R + f(T)) | T \right] \notag \\
&=\mathbb{E}_R\left[T\cdot R | T \right] + \mathbb{E}_R\left[ T\cdot f(T) | T\right] \notag\\
&= T\cdot(\mathbb{E}\left[R | T\right] + f(T)),
\end{align}
which follows from both the linearity of expectation as well as the independence of 
$T$ from $R$. Taken together, we have
\begin{align}
\mathbf{var}[\widehat{Y}] &=
  \mathbb{E}\left[T^2\cdot \mathbf{var}[R|T] \right] \notag\\
  &\quad+ \mathbf{var}[T\cdot(\mathbb{E}[R|T] + f(T))].
\end{align}
From the above, we see that the only a variance term is dependent upon the choice of 
$f(T)$. Since the variance must not be negative, the minimum variance 
$\sigma^* = \mathbb{E}[T^2\cdot\mathbf{var}[R|T]]$ is obtained when 
$\mathbf{var}\left[T\cdot(\mathbb{E}[R|T] + f(T) )\right] = 0$. It can be 
directly observed that $f(T) = -\mathbb{E}[R|T]$ is one such choice of $f(T)$ for
which $\mathbf{var}[\widehat{Y}] = \sigma^*$. \qed

\subsection{Proof of Lemma \ref{lemma:two-trial-compare}}
\label{apdx:proof-lemma-two-trial-compare}
We first calculate the variance of the modified outcome as presented in \cite{tian2014simple}. Recalling
$Y_{\rm mod} = T\cdot R$, we see
\begin{align}
\mathbf{var}&[Y_{\rm mod}] = \mathbf{var}[R\cdot T], \notag \\
&= \mathbb{E}[\mathbf{var}_R[T\cdot R |T]] + \mathbf{var}[\mathbb{E}_R[T\cdot R | T]], \notag \\
&= \mathbb{E}[T^2 \cdot\mathbf{var}[R|T]] + \mathbf{var}[T\cdot\mathbb{E}[R|T]].
\end{align}
which follows first from the law of total variation, and also from our earlier calculations in 
Appendix \ref{apdx:proof-var-reduce}. 

We now make the calculation of this variance explicit through the empirical per-trial 
first and second moments, $\mu_T$ and $\sigma_T^2$, respectively. For the first term,
\begin{align}
  \mathbb{E}[T^2\cdot\mathbf{var}&[R|T]] \notag\\
  &=\sum_{t\in\{\pm 1 \}} \pi_t \cdot t^2 \cdot \mathbf{var}[R|T=t], \notag \\
  &= \frac{1}{2}\cdot(\sigma_1^2 + \sigma_{-1}^2)^2.
\end{align}
Subsequently,
\begin{align}
\mathbf{var}&[T\cdot\mathbb{E}[R|T]] \notag \\ 
&= \mathbb{E}[(T\cdot\mathbb{E}[R|T])^2] - \mathbb{E}^2[T\cdot\mathbb{E}[R|T]], \notag\\
&= \sum_{t\in\{\pm 1 \}} \pi_t \cdot t^2 \cdot \mathbb{E}^2[R|T=t] \notag\\
&\quad\quad-\left(\sum_{t\in\{\pm 1 \}} \pi_t \cdot t \cdot \mathbb{E}[R|T=t]\right)^2, \notag\\
&= \frac{1}{4}(\mu_1 - \mu_{-1})^2.
\end{align}
Finally, we have
\begin{equation}
  \mathbf{var}[Y_{\rm mod}] = \frac{1}{2}(\sigma_1^2 + \sigma_{-1}^2) + \frac{1}{4}(\mu_1 - \mu_{-1})^2.
\end{equation}
In contrast, we have the value of the modified outcome with per-treatment centering, as shown in
Lemma \ref{lemma:var-reduce},
\begin{align}
  \mathbf{var}[\widetilde{Y}] &= \mathbb{E}[T^2 \cdot\mathbf{var}[R|T]], \notag \\
  &= \frac{1}{2}(\sigma_1^2 + \sigma_{-1}^2).
\end{align}
Thus, for the ratio $\gamma = \frac{\mathbf{var}[\widetilde{Y}]}{\mathbf{var}[Y_{\rm mod}]}$,
\begin{align}
  \gamma &= 
  \frac{\frac{1}{2}(\sigma_1^2 + \sigma_{-1}^2)}{\frac{1}{2}(\sigma_1^2 + \sigma_{-1}^2) + \frac{1}{4}(\mu_1 - \mu_{-1})^2},
  \notag \\
  &= \frac{1}{1 + \frac{1}{2}\cdot\frac{(\mu_1 - \mu_{-1})^2}{\sigma_1^2 + \sigma^2_{-1}}}, \notag \\
  &= 1 - {\rm sig}\left[\log \frac{(\mu_1 - \mu_{-1})^2}{\sigma_1^2 + \sigma_{-1}^2} - \log 2 \right].
\end{align}
where ${\rm sig}(x) = \frac{1}{1+e^{-x}}$ is the logistic sigmoid function. Since ${\rm sig}(x)$ is bounded
in range $[0,1]$ over the domain $x \in \mathbb{R}$, we can see that $\gamma \in [0,1]$, as well. Additionally,
we see directly that the only instance in which $\gamma = 1$ is the case that $\mu_1 = \mu_{-1}$, that is,
when the measured responses for both treatments have the same expectation. \qed

\subsection{Proof of Lemma \ref{lemma:centering}}
\label{apdx:proof-lemma-centering}

The conditional covariance of the centered outcome can be shown to be 
\begin{align}
\mathbf{cov}&[ \widetilde{R}, T | X] \notag\\
&=    \mathbf{cov}\left[ R - \mathbb{E}[R|T] , T | X\right], \notag \\
&= 
    \mathbf{cov}[R,T|X] - \mathbf{cov}[\mathbb{E}[R|T], T|X], \notag \\
&=
    \mathbf{cov}[R,T|X] - \mathbb{E}[T\cdot \mathbb{E}[R | T] | X],   \notag \\
    &\quad + \mathbb{E}[\mathbb{E}[R|T]|X]\cdot \mathbb{E}[T|X].
\end{align}
Assuming a properly randomized trial such that $T \perp X$ and centered treatment variables, 
$\mathbb{E}[T] = 0$, then the covariance becomes
\begin{align}
\mathbf{cov}&[ \widetilde{R}, T | X] \notag \\
&= \mathbf{cov}[R,T|X] - \mathbb{E}[T\cdot \mathbb{E}[R|T]|X].
\end{align}
However, since $T\perp X$ and the measured outcome $R$ is a fixed property of 
the dataset which does not change based on the covariate selection $X$, we
note that $\mathbb{E}[T\cdot \mathbb{E}[R|T]|X] = \mathbb{E}[T\cdot \mathbb{E}[R|T]]$.
Thus, the second term is constant with respect to the covariate under test and 
we write it as $\mathcal{C}_{R,T}$. \qed

\subsection{Proof of Lemma \ref{lemma:expected-covariance}}
\label{apdx:proof-lemma-expected-covariance}
The unconditioned covariance between the centered treatment response and the treatment 
indicators is
\begin{align}
\textbf{cov}&[\widetilde{R}, T] \notag\\
    &= \textbf{cov}[R - \mathbb{E}[R|T], T], \notag \\
    &= \textbf{cov}[R,T] - \mathbf{cov}[\mathbb{E}[R|T], T], 
\end{align}
which follows from the simple distributional property of the covariance. Since we assume 
centered treatment indicator variables, $\mathbb{E}[T] = 0$, this difference in covariances can
be simplified to the difference,
\begin{equation}
\textbf{cov}[\widetilde{R}, T] = 
    \mathbb{E}[T\cdot R] - \mathbb{E}[T\cdot\mathbb{E}[R|T]].
\end{equation}
However, we can see that these two terms are equivalent. Using the total expectation, we observe
that $\mathbb{E}[T\cdot R] = \mathbb{E}_T[\mathbb{E}_R[T\cdot R| T]]$, where we use subscripts 
to make the expectations explicit. Moving the constant $T$ outside of the conditional expectation,
we have $\mathbb{E}[T\cdot R] = \mathbb{E}[T\cdot \mathbb{E}[R|T]]$, which shows the 
equivalence of the two terms. Thus, $\textbf{cov}[\widetilde{R}, T] = 0$. \qed

\clearpage
\section{Full Synthetic Results}
\label{apdx:full-results}
\begin{table}[h]
\centering
  \begin{adjustbox}{max width=4.1in}
  \begin{tabular}{l*{8}{c}} \toprule
    {Synthetic Model} & {\it MoLin}  & {\it AreaB} & {\it SAreaB} & {\it MaxB} & {\textit{MaxB}$_N$} & {\it MaxBE} & {\textit{MaxBE}$_N$} & {\it Comb} \\
    \midrule
    {\textbf{L}}  &  \multicolumn{8}{c}{{}} \\ 
\cmidrule(lr){1-1}
{Base ($W_1$)} & 0.958 & 0.979 & \gudr{0.990} & 0.896 & 0.896 & 0.604 & 0.396 & {\textbf{0.990} (0.0\%)}\\
{Treatment ($W_2$)} & \gudr{0.969} & 0.948 & 0.927 & 0.885 & 0.854 & 0.510 & 0.250 & {0.896 (-7.5\%)}\\
{Noise} & 0.865 & 0.865 & \gudr{0.875} & 0.792 & 0.771 & 0.490 & 0.177 & {0.833 (-4.8\%)}\\
{Decoy} & 0.667 & \gudr{0.719} & 0.677 & 0.531 & 0.469 & 0.094 & 0.031 & {0.562 (-21.8\%)}\\
\multicolumn{9}{c}{{}}\\
{\textbf{PC-Int}$_2$}  &  \multicolumn{8}{c}{{}} \\
 \cmidrule(lr){1-1}
{Noise} & 0.042 & 0.042 & 0.052 & 0.188 & 0.104 & \gudr{0.427} & 0.219 & {0.375 (-12.2\%)}\\
{Decoy} & 0.000 & 0.000 & 0.000 & \gudr{0.010} & 0.000 & 0.010 & 0.000 & {0.000 (-100.0\%)}\\
{Base ($W_1$)} & 0.021 & 0.010 & 0.031 & 0.083 & 0.042 & \gudr{0.177} & 0.083 & {0.125 (-29.4\%)}\\
\multicolumn{9}{c}{{}}\\
{\textbf{PC-Th}$_1$}  &  \multicolumn{8}{c}{{}} \\
 \cmidrule(lr){1-1}
{Noise} & 0.531 & 0.562 & 0.573 & 0.635 & \gudr{0.833} & 0.417 & 0.562 & {0.823 (-1.2\%)}\\
{Decoy} & 0.021 & 0.021 & 0.042 & 0.021 & \gudr{0.062} & 0.031 & 0.042 & {0.052 (-16.1\%)}\\
{Base ($W_1$)} & 0.198 & 0.188 & 0.188 & 0.104 & \gudr{0.333} & 0.073 & 0.177 & {0.292 (-12.3\%)}\\
\multicolumn{9}{c}{{}}\\
{\textbf{PC-Int}$_1$}  &  \multicolumn{8}{c}{{}} \\
 \cmidrule(lr){1-1}
{Noise} & 0.688 & 0.719 & 0.719 & \gudr{0.729} & 0.635 & 0.583 & 0.219 & {\textbf{0.750} (2.9\%)}\\
{Decoy} & 0.292 & 0.292 & 0.302 & \gudr{0.354} & 0.292 & 0.177 & 0.010 & {0.240 (-32.2\%)}\\
{Base ($W_1$)} & 0.802 & 0.802 & 0.823 & \gudr{0.844} & 0.677 & 0.667 & 0.312 & {0.698 (-17.3\%)}\\
\multicolumn{9}{c}{{}}\\
{\textbf{PC-Th}$_2$}  &  \multicolumn{8}{c}{{}} \\
 \cmidrule(lr){1-1}
{Noise} & 0.010 & 0.219 & 0.260 & 0.375 & 0.427 & \gudr{0.781} & 0.688 & {0.771 (-1.3\%)}\\
{Decoy} & 0.000 & 0.000 & 0.000 & 0.021 & 0.010 & \gudr{0.229} & 0.156 & {0.094 (-59.0\%)}\\
{Base ($W_1$)} & 0.042 & 0.219 & 0.229 & 0.312 & 0.365 & \gudr{0.688} & 0.646 & {0.500 (-27.3\%)}\\
\multicolumn{9}{c}{{}}\\
{\textbf{NL}}  &  \multicolumn{8}{c}{{}} \\
 \cmidrule(lr){1-1}
{Noise: $X^7$, $X^8$} & 0.026 & 0.042 & 0.068 & 0.083 & 0.120 & \gudr{0.203} & 0.203 & {0.161 (-20.7\%)}\\
{Noise: $X^6$} & \gudr{0.281} & 0.271 & 0.260 & 0.219 & 0.260 & 0.104 & 0.052 & {\textbf{0.375} (33.5\%)}\\
{Noise: $X^5$} & \gudr{1.000} & 1.000 & 1.000 & 0.990 & 0.969 & 0.698 & 0.312 & {\textbf{1.000} (0.0\%)}\\
{Noise: $X^3$} & 0.062 & 0.146 & 0.135 & 0.135 & 0.292 & 0.562 & \gudr{0.615} & {0.479 (-22.1\%)}\\
{Noise: $X^1$} & \gudr{0.323} & 0.302 & 0.302 & 0.302 & 0.292 & 0.115 & 0.083 & {\textbf{0.365} (13.0\%)}\\
{Decoy: $X^7$, $X^8$} & 0.000 & 0.000 & 0.000 & 0.000 & 0.005 & 0.031 & \gudr{0.042} & {0.005 (-88.1\%)}\\
{Decoy: $X^6$} & 0.073 & \gudr{0.083} & 0.083 & 0.031 & 0.010 & 0.010 & 0.000 & {0.073 (-12.0\%)}\\
{Decoy: $X^5$} & 0.979 & 0.979 & \gudr{0.990} & 0.938 & 0.927 & 0.688 & 0.167 & {0.969 (-2.1\%)}\\
{Decoy: $X^3$} & 0.010 & 0.010 & 0.010 & 0.031 & 0.125 & 0.302 & \gudr{0.385} & {0.229 (-40.5\%)}\\
{Decoy: $X^1$} & 0.104 & \gudr{0.115} & 0.115 & 0.052 & 0.052 & 0.000 & 0.000 & {0.083 (-27.8\%)}\\

    \bottomrule
    \end{tabular}
\end{adjustbox}
  \caption{Full table of raw statistical power for synthetic tests. 
             \label{tab:power}}  
\end{table}

\begin{table}[h]
\centering
\begin{adjustbox}{max width=4.1in}
  \begin{tabular}{l*{8}{c}} \toprule
    {Synthetic Model} & {\it MoLin}  & {\it AreaB} & {\it SAreaB} & {\it MaxB} & {\textit{MaxB}$_N$} & {\it MaxBE} & {\textit{MaxBE}$_N$} & {\it Comb} \\
    \midrule
    {\textbf{L}}  &  \multicolumn{8}{c}{{}} \\ 
\cmidrule(lr){1-1}
{Base ($W_1$)} & 0.994 & 0.998 & \gudr{1.000} & 0.969 & 0.965 & 0.880 & 0.697 & {0.999 (-0.1\%)}\\
{Treatment ($W_2$)} & \gudr{1.000} & 0.990 & 0.980 & 0.953 & 0.938 & 0.753 & 0.618 & {0.967 (-3.3\%)}\\
{Noise} & 0.978 & 0.975 & \gudr{0.982} & 0.950 & 0.925 & 0.750 & 0.455 & {0.976 (-0.6\%)}\\
{Decoy} & 0.991 & \gudr{1.000} & 0.993 & 0.947 & 0.873 & 0.551 & 0.158 & {0.925 (-7.5\%)}\\
\multicolumn{9}{c}{{}}\\
{\textbf{PC-Int}$_2$}  &  \multicolumn{8}{c}{{}} \\
 \cmidrule(lr){1-1}
{Noise} & 0.120 & 0.347 & 0.433 & 0.578 & 0.485 & \gudr{0.839} & 0.578 & {0.838 (-0.1\%)}\\
{Decoy} & 0.034 & 0.037 & 0.060 & 0.135 & 0.102 & \gudr{0.250} & 0.165 & {0.077 (-69.2\%)}\\
{Base ($W_1$)} & 0.062 & 0.081 & 0.103 & 0.190 & 0.150 & \gudr{0.349} & 0.221 & {0.234 (-33.0\%)}\\
\multicolumn{9}{c}{{}}\\
{\textbf{PC-Th}$_1$}  &  \multicolumn{8}{c}{{}} \\
 \cmidrule(lr){1-1}
{Noise} & 0.632 & 0.652 & 0.657 & 0.665 & \gudr{0.849} & 0.547 & 0.652 & {\textbf{0.883} (4.0\%)}\\
{Decoy} & 0.225 & 0.261 & 0.262 & 0.256 & \gudr{0.570} & 0.209 & 0.321 & {0.286 (-49.8\%)}\\
{Base ($W_1$)} & 0.517 & 0.493 & 0.517 & 0.350 & \gudr{0.762} & 0.303 & 0.500 & {0.599 (-21.4\%)}\\
\multicolumn{9}{c}{{}}\\
{\textbf{PC-Int}$_1$}  &  \multicolumn{8}{c}{{}} \\
 \cmidrule(lr){1-1}
{Noise} & 0.899 & 0.914 & 0.917 & \gudr{0.940} & 0.860 & 0.812 & 0.592 & {0.921 (-2.0\%)}\\
{Decoy} & 0.861 & 0.883 & 0.940 & \gudr{0.954} & 0.790 & 0.636 & 0.271 & {0.865 (-9.3\%)}\\
{Base ($W_1$)} & 0.936 & 0.956 & 0.983 & \gudr{1.000} & 0.857 & 0.807 & 0.405 & {0.919 (-8.1\%)}\\
\multicolumn{9}{c}{{}}\\
{\textbf{PC-Th}$_2$}  &  \multicolumn{8}{c}{{}} \\
 \cmidrule(lr){1-1}
{Noise} & 0.043 & 0.588 & 0.638 & 0.699 & 0.740 & \gudr{0.946} & 0.902 & {\textbf{0.959} (1.4\%)}\\
{Decoy} & 0.031 & 0.135 & 0.173 & 0.286 & 0.377 & \gudr{0.990} & 0.797 & {0.705 (-28.8\%)}\\
{Base ($W_1$)} & 0.089 & 0.302 & 0.323 & 0.445 & 0.516 & \gudr{1.000} & 0.880 & {0.710 (-29.0\%)}\\
\multicolumn{9}{c}{{}}\\
{\textbf{NL}}  &  \multicolumn{8}{c}{{}} \\
 \cmidrule(lr){1-1}
{Noise: $X^7$, $X^8$} & 0.130 & 0.400 & 0.431 & 0.461 & 0.692 & 0.920 & \gudr{0.990} & {0.890 (-10.1\%)}\\
{Noise: $X^6$} & 0.954 & \gudr{0.956} & 0.951 & 0.889 & 0.822 & 0.569 & 0.207 & {0.900 (-5.9\%)}\\
{Noise: $X^5$} & 0.999 & \gudr{1.000} & 1.000 & 0.996 & 0.993 & 0.914 & 0.689 & {0.996 (-0.4\%)}\\
{Noise: $X^3$} & 0.062 & 0.580 & 0.591 & 0.578 & 0.783 & 0.931 & \gudr{1.000} & {0.917 (-8.3\%)}\\
{Noise: $X^1$} & \gudr{0.935} & 0.921 & 0.907 & 0.813 & 0.906 & 0.287 & 0.099 & {0.914 (-2.2\%)}\\
{Decoy: $X^7$, $X^8$} & 0.033 & 0.052 & 0.074 & 0.104 & 0.221 & 0.352 & \gudr{0.448} & {0.276 (-38.4\%)}\\
{Decoy: $X^6$} & 0.720 & 0.729 & \gudr{0.730} & 0.639 & 0.510 & 0.297 & 0.113 & {0.647 (-11.4\%)}\\
{Decoy: $X^5$} & \gudr{1.000} & 1.000 & 1.000 & 0.998 & 0.996 & 0.911 & 0.400 & {0.997 (-0.3\%)}\\
{Decoy: $X^3$} & 0.029 & 0.235 & 0.264 & 0.270 & 0.576 & 0.833 & \gudr{0.987} & {0.726 (-26.4\%)}\\
{Decoy: $X^1$} & \gudr{0.720} & 0.696 & 0.616 & 0.500 & 0.611 & 0.088 & 0.053 & {0.630 (-12.5\%)}\\

    \bottomrule
    \end{tabular}
\end{adjustbox}
  \caption{Full table of normalized area for synthetic tests. 
             \label{tab:area_norm}}  
\end{table}

\end{document}